
\documentclass[letterpaper, 10 pt, draftcls, onecolumn]{ieeeconf}  

\IEEEoverridecommandlockouts                              
\overrideIEEEmargins

\usepackage{graphicx} 
\usepackage{amsmath,bm,times} 
\usepackage{amssymb}
\usepackage{tabularx}
\usepackage{algpseudocode}
\usepackage{tikz}

\usetikzlibrary{shapes,arrows,backgrounds,fit,positioning}

\newtheorem{mydef}{\textbf{Definition}} [section]
\newtheorem{thm}[mydef] {\textbf{Theorem}}
\newtheorem{lemma}[mydef]{\textbf{Lemma}}
\newtheorem{prop}[mydef]{\textbf{Proposition}}

\newtheorem{remark}[mydef]{\textbf{Remark}}


\allowdisplaybreaks[1]

\title{\LARGE \bf
A Class of LTI Distributed Observers for LTI Plants: Necessary and Sufficient Conditions for Stabilizability
}


\author{Shinkyu Park and Nuno C. Martins
\thanks{}
\thanks{Shinkyu Park and Nuno C. Martins are with the Department of Electrical and Computer Engineering, University of Maryland College Park, College Park, MD 20742-4450, USA.
        {\tt\small \{skpark, nmartins\}@umd.edu}}%
}

\begin{document}

\maketitle

\begin{abstract}
Consider that an autonomous linear time-invariant (LTI) plant is given and that a network of LTI observers assesses its output vector. The dissemination of information within the network is dictated by a pre-specified directed graph in which each vertex represents an observer. Each observer computes its own state estimate using only the portion of the output vector accessible to it and the state estimates of other observers that are transmitted to it by its neighbors, according to the graph. This paper proposes an update rule that is a natural generalization of consensus, and for which we determine necessary and sufficient conditions for the existence of parameters for the update rule that lead to asymptotic omniscience of the state of the plant at all observers. The conditions reduce to certain detectability requirements that imply that if omniscience is not possible under the proposed scheme then it is not viable under any other scheme that is subject to the same communication graph, including nonlinear and time-varying ones.
\end{abstract}

\section{Introduction} \label{sec_intro}
Consider the following linear time-invariant (LTI) plant with state $x(k)$ and output vector $y(k)$\footnote{In order to simplify our notation, without loss of generality, we omit noise terms in the state-space equation \eqref{eq_LTI_plant}. See \textbf{(i)} of Subsection \ref{subsection_properties} for more discussion.}:
\begin{equation} \label{eq_LTI_plant}
	\begin{split}
		x(k+1) = Ax(k)\\
		y(k) = Hx(k) \\
	\end{split}
\end{equation}
\begin{equation*}
	\begin{split}
		\text{where } &y(k) = \left ( y_1^T(k), \cdots, y_m^T(k) \right )^T \text{ with } y_i(k) = H_i x(k), \\
		&x(k) \in \mathbb{R}^{n}, y_i(k) \in \mathbb{R}^{r_i}
	\end{split}
\end{equation*}

Let $\mathcal{G} = \left( \mathbb{V}, \mathbb{E} \right)$ be a directed graph. Each vertex in $\mathbb{V}$ represents an observer and each edge in $\mathbb{E} \subseteq \mathbb{V} \times \mathbb{V}$ determines the viability and direction of information exchange between two observers. Each observer computes a state estimate based on a portion of the output of the plant and state estimates of the other observers connected to it via an edge of $\mathcal{G}$. We refer to $\mathcal{G}$ and the collection of all observers as a \textit{distributed observer} (see Fig. \ref{figure_framework}).

Let $\hat{x}_i(k)$ be a state estimate by observer $i$ at time $k$. A distributed observer is said to achieve \textit{omniscience asymptotically} if it holds that $\lim_{k \rightarrow \infty} ||\hat{x}_i(k) - x(k)|| = 0$ for all $i \in \mathbb{V}$, i.e., the state estimate at every observer converges to the state of the plant.

\tikzstyle{plant} = [draw, text width=5em, text centered, rounded corners, minimum height=5em, thick]
\tikzstyle{observer} = [draw, text width=5em, text centered, rounded corners, minimum height=2em, thick]
\tikzstyle{observer_e} = [text width=5em, text centered]
\tikzstyle{vertex} = [circle, draw, inner sep=0, minimum width=.5em, thick, fill=black]

\def\blockdist{2.4}
\def\edgedist{2.5}
\def\ybias{-10}

\begin{figure} [t]
	\centering
	\begin{tikzpicture} \label{figure_framework}
		\node (plant) [plant] {LTI Plant};
		\path ([yshift=-\ybias] plant.north east)+(\blockdist,0) node (ob1) [observer] {Observer 1};
		\path (plant.east)+(\blockdist,0.1) node (obmid) [observer_e] {\LARGE $\vdots$};
		\path ([yshift=\ybias] plant.south east)+(\blockdist,0) node (obm) [observer] {Observer m};
		
		\path (ob1.east)+(1.5,0.3) node [vertex] (1) {};
		\node[vertex] (2) [below right of=1] {};
		\node[vertex] (3) [above right of=2] {};
		\node[vertex] (4) [below of=2] {};
		\node[vertex] (5) [right of=4] {};
		\path [-, thick]
			(1) edge (2)
			(2) edge (3)
			(2) edge (4)
			(4) edge (5);

		\path (4.south)+(0,-1) node [text centered, text width=7em] (graph_text) {Communication Graph $\mathcal{G}$};

		\node [draw, rounded corners, thick] (graph) [fit=(1) (2) (3) (4) (5) (graph_text)] {};

		\draw[->, thick] (plant.30) -- node [above] {\large $y_1$} ++(0.7,0) |- (ob1.west);
		\draw[->, thick] (plant.-30) -- node [below] {\large $y_m$} ++(0.7,0) |- (obm.west);
		\path (plant.east)+(0.3, 0.1) node (ymid) {\LARGE $\vdots$};

		\draw [->, thick] (ob1.5) --  (graph.west |- ob1.5);
		\draw [<-, thick] (ob1.-5) --  (graph.west |- ob1.-5);

		\draw [->, thick] (obm.5) --  (graph.west |- obm.5);
		\draw [<-, thick] (obm.-5) --  (graph.west |- obm.-5);

		\node [rounded corners, draw=black!50, dashed, thick] (do) [fit=(ob1) (obm) (graph)] {};
		\path (do.south)+(0,-.5) node (dist_ob) {Distributed Observer};
	\end{tikzpicture}
	\caption {A framework for distributed state estimation.}
	\label{figure_framework}
\end{figure}
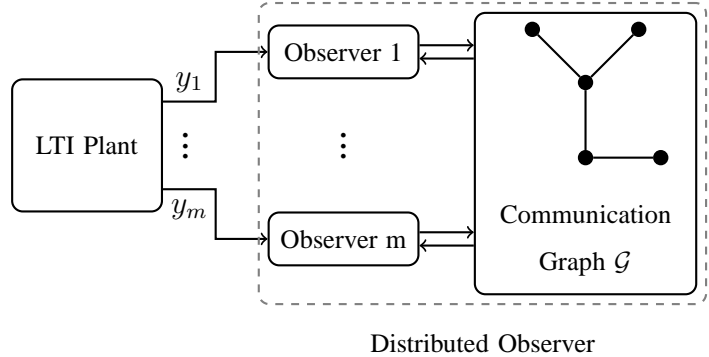

\textbf{Our main goals} are \textbf{(i)} given a plant \eqref{eq_LTI_plant} and a graph $\mathcal{G}$, to determine necessary and sufficient conditions for the existence of a LTI distributed observer that achieves omniscience, and \textbf{(ii)} to devise a method to obtain an omniscience-achieving solution, when one exists. \textbf{The main technical challenges} are that \textbf{(i)} the observers observe only a portion of the output of the plant and \textbf{(ii)} information exchange among the observers is constrained by the pre-selected graph $\mathcal{G}$.

\subsection{Motivation}
As will be specified in Section \ref{sec_formulation}, the class of update rules adopted in this work is distributed and linear. Some advantages of this class are briefly discussed in this subsection.

\subsubsection{Centralized vs Distributed}
For the sake of argument, we consider the following approach and call it \textit{centralized}: Suppose that under the same configuration as in Fig. \ref{figure_framework}, each observer would transmit its local measurement (the observed portion of the output of the plant) to its neighbors and, at the same time, would relay local measurements received from neighboring observers\footnote{This scheme is different from our approach as the observers in our framework exchange state estimates instead of measurements.}. If the underlying communication graph is \textit{well-connected}, then each observer would eventually receive enough information to estimate the entire state of the plant. In addition, if the states of the observers were not exchanged, then the dynamics of the observers would be  decoupled, and the design of each observer could be done by a (standard) centralized method.

However, this simple approach may not be suitable for implementation due to the profuse need for memory and communication resources. In particular, the centralized approach would require each observer to store its past estimates or measurements in its memory to account for delays incurred when exchanging information across multiple hops\footnote{This problem can be viewed as state estimation with delayed measurements. The reader is referred to \cite{gopalakrishnan2011_j_process_control} for a concise overview of existing (Kalman filter-based) methods.}. Moreover, this approach would not scale well because information transmission requirements would increase exponentially with the number of observers.

\subsubsection{Linear vs Nonlinear}
In stochastic control, it is well known that nonlinear controllers may outperform linear ones in some optimality criterion \cite{witsenhausen1968_siam_jc}. As an estimation problem can be formulated as an associated control problem, the same logic would hold for optimal estimation problems. However, in what regards to stability, we show that the proposed class of distributed observers performs equally well as nonlinear ones.

Robustness is an essential issue in feedback control problems, and design of robust control laws is of particular interesting, e.g., $H_2/H_{\infty}$ optimal control. There are abundant mathematical theories and computational algorithms for analysis and synthesis of LTI feedback systems with respect to certain robustness criteria \cite{dullerud2000_springer}. As the proposed class of distributed observers is linear and time-invariant, one would benefit from existing schemes in robust control literature in determining parameters for the observers.

\subsection{Contribution of This Work}
In order to achieve the goals, this paper focuses on the following two contributions: \textbf{(i)} We propose a parametrized class of distributed observers within which information exchange conforms to a pre-specified directed communication graph $\mathcal{G}$\footnote{Even though various forms of distributed observers are proposed in literature, the class of LTI distributed observers adopted in this paper, which is specified in Section \ref{sec_formulation}, is broader.}. \textbf{(ii)} We find necessary and sufficient conditions for the existence of parameters for a distributed observer in the aforementioned class that achieves omniscience asymptotically.

A detailed analysis is given in Section \ref{sec_proof_main_thm}, and hinges on the fact that omniscience for the proposed class of distributed observers can be cast as the stabilization of an associated LTI plant via fully decentralized control.

Using this analogy, we show that the existence of an omniscience-achieving distributed observer depends only on the detectability of the subsystems of the plant associated with the strong components (maximal strongly connected subgraphs) of $\mathcal{G}$. It follows from our analysis that if there are no omniscience-achieving solutions in the proposed class then omniscience cannot be attained by any other scheme -- including nonlinear and time-varying ones -- that is subject to the same graph.

\subsection{Paper Organization}
In Section \ref{sec_formulation} we define a parametrized class of distributed observers used throughout the paper, and we provide a comparative review of existing work. Section \ref{sec_main_result} gives the main result of this paper, which states the necessary and sufficient conditions for the existence of parameters for an omniscience-achieving distributed observer in the proposed class. An application for stabilization via distributed control is discussed in Section~\ref{sec_applications}. The detailed proof of the main result is provided in Section \ref{sec_proof_main_thm}. Section \ref{sec_conclusion} ends the paper with conclusions.

\section{Problem Formulation} \label{sec_formulation}
\subsection{Notation}
The following is the notation used throughout this paper.

\begin{tabularx} {\linewidth} {l X}
	$m$ & The number of subsystems as in \eqref{eq_LTI_plant}. \\
	$\mathbb{V}, \mathbb{E}$ & A vertex set defined as $\mathbb{V} \overset{def}{=} \{1, \cdots, m\}$ and an edge set $\mathbb{E} \subseteq \mathbb{V} \times \mathbb{V}$, respectively. \\
	$\mathcal{G} = \left( \mathbb{V}, \mathbb{E} \right)$ & A graph formed by the vertex set $\mathbb{V}$ and edge set $\mathbb{E}$\footnote{For notational convenience, we assume that no vertices of $\mathcal{G}$ have a loop, i.e., $(i,i) \notin \overline{\mathbb{E}}$, unless otherwise specified.}. \\
	$I_n$ & The $n$-dimensional identity matrix\\
	$\otimes$ & Kronecker product \\
	$sp(M)$ & The set of all eigenvalues of a matrix $M$, given by 
		
		$sp(M) = \{\lambda \in \mathbb{C} \mid det(M - \lambda I)=0\}$. \\
	$\Lambda_U(M)$ & The set of all unstable eigenvalues of a matrix $M$, given by 
	
		$\Lambda_U(M) = \{\lambda \in sp(M) \mid |\lambda| \geq 1\}$. \\
%
	$diag \left (M_1, \cdots, M_m \right )$ & For a set $\{M_1, \cdots, M_m\}$ of matrices, we define: 
	
		$diag \left (M_1, \cdots, M_m \right ) \overset{def}{=} \begin{pmatrix} M_{1} & \cdots & \mathbf{0} \\ \vdots & \ddots & \vdots \\ \mathbf{0} & \cdots & M_{m} \end{pmatrix}$. \\
	$W = (w_{ij})_{i,j \in \mathbb{V}}$ & For the set $\mathbb{V}$, $W = (w_{ij})_{i,j \in \mathbb{V}}$ is a matrix whose $i,j$-th element is $w_{ij}$. \\
	$v = (v_{i})_{i \in \mathbb{V}}$ & For the set $\mathbb{V}$, $v = (v_{i})_{i \in \mathbb{V}}$ is a vector whose $i$-th element is $v_{i}$.\\
	$B_{\mathbb{J}}, H_{\mathbb{J}}$ & For a set $\mathbb{J} = \{j_1, \cdots, j_p\} \subseteq  \{1, \cdots, m \}$ and matrices 
 $B$ and $H$ where $B=~\begin{pmatrix} B_1 & \cdots & B_m \end{pmatrix}$ and $H = \begin{pmatrix} H_1^T & \cdots & H_m^T \end{pmatrix}^T$, we define:
	
	$B_{\mathbb{J}} \overset{def}{=} \begin{pmatrix} B_{j_1} & \cdots & B_{j_p} \end{pmatrix}$ and $H_{\mathbb{J}} \overset{def}{=} \begin{pmatrix} H_{j_1}^T & \cdots & H_{j_p}^T \end{pmatrix}^T$. \\
\end{tabularx}

\subsection{The Class of LTI Distributed Observers and Main Problem} \
We consider that a LTI plant \eqref{eq_LTI_plant} and a directed communication graph $\mathcal{G}=\left( \mathbb{V}, \mathbb{E} \right)$ are given. Each vertex $i$ in $\mathbb{V}$ is associated with observer $i$, which assesses $y_i(k)=H_ix(k)$. We adopt the convention that $(i,j) \in \mathbb{E}$ if information can be transferred from observer $i$ to observer $j$. The neighborhood of observer $i$, denoted as $\mathbb{N}_i$, is a subset of $\mathbb{V}$ that contains $i$ and all other vertices with an outgoing edge towards $i$. Essentially, the elements of $\mathbb{N}_i$ represent the observers that can transmit information to observer $i$.

In this paper, we adopt the parametrized class of distributed observers inspired on \cite{park2012_ieee_cdc}, where each observer updates its state according to the following state-space equation:
\begin{equation} \label{eq_distributed_observer}
	\begin{split}
		\hat{x}_i (k+1) &= A \sum_{j \in \mathbb{N}_i} \mathbf{w}_{ij} \underbrace{\hat{x}_j (k)}_\text{state estimate} + \mathbf{K}_{i} \underbrace{ \left( y_i(k) - H_i \hat{x}_i (k)\right)}_\text{measurement innovation} + \mathbf{P}_{i} \underbrace{z_i (k)}_\text{augmented state}, ~ i \in \mathbb{V}\\
		z_i(k+1) &= \mathbf{Q}_{i} \left(y_i (k) - H_i \hat{x}_i (k) \right) + \mathbf{S}_{i} z_i (k)
	\end{split}
\end{equation}
where $A$ and $H_i$ are given in \eqref{eq_LTI_plant}, and $\mathbf{w}_{ij} \in \mathbb{R}$, $\mathbf{K}_{i} \in \mathbb{R}^{n \times r_{i}}$, $\mathbf{P}_{i} \in \mathbb{R}^{n \times \mu_i}$, $\mathbf{Q}_{i} \in \mathbb{R}^{\mu_i \times r_{i}}$, $\mathbf{S}_{i} \in \mathbb{R}^{\mu_i \times \mu_i}$ are the design parameters and $\mu_i$ is the dimension of the augmented state $z_i(k)$. We also refer to $\left\{ \mathbf{K}_i, \mathbf{P}_i, \mathbf{Q}_i, \mathbf{S}_i \right\}_{i \in \mathbb{V}}$ as gain matrices and $\mathbf{W} = \left( \mathbf{w}_{ij} \right)_{i,j \in \mathbb{V}}$ as a weight matrix that must satisfy $\sum_{j \in \mathbb{N}_i} \mathbf{w}_{ij} = 1$ for all $i \in\mathbb{V}$\footnote{We use bold font to represent the parameters to be designed.}. It follows from \eqref{eq_distributed_observer} that observer $i$ uses state estimates from within its neighborhood which implies that communication is distributed.

The following definition of omniscience-achieving parameters will be used throughout the paper.

\begin{mydef} [\textbf{Omniscience-achieving Parameters}] \label{def_omniscience_param}
	Consider a LTI plant with state $x(k)$ and a distributed observer with state estimates $\{\hat{x}_{i}(k)\}_{i \in \mathbb{V}}$ computed according to \eqref{eq_distributed_observer}. Any parameters $\mathbf{W}$ and $\left\{ \mathbf{K}_i, \mathbf{P}_i, \mathbf{Q}_i, \mathbf{S}_i \right\}_{i \in \mathbb{V}}$ of \eqref{eq_distributed_observer} are referred to as \textit{omniscience-achieving} if the resultant distributed observer achieves omniscience.
\end{mydef}

The following is the main problem addressed in this paper.

\begin{center}
	\line(1,0){450}
\end{center}

\textbf{\textit{Problem:}} Given a LTI plant \eqref{eq_LTI_plant} and a graph $\mathcal{G}$, determine necessary and sufficient conditions for the existence of a weight matrix $\mathbf{W} = \left( \mathbf{w}_{ij} \right)_{i,j \in \mathbb{V}}$ and gain matrices $\left\{ \mathbf{K}_i, \mathbf{P}_i, \mathbf{Q}_i, \mathbf{S}_i \right\}_{i \in \mathbb{V}}$ in \eqref{eq_distributed_observer} such that the corresponding distributed observer achieves omniscience asymptotically.
\begin{center}
	\line(1,0){450}
\end{center}

\subsection{Comparative Review of Related Work}
In \cite{olfati-saber2005_ieee_cdc_ecc, olfati-saber07_ieee_cdc}, the author introduced an algorithmic approach for distributed state estimation. The proposed method consists of a state estimation component, which is rooted on the Kalman filter, and a data fusion component, which utilizes a consensus algorithm \cite{olfati-saber2007_IEEEproceedings}. The performance of this approach is studied in \cite{carli2008_ieee_j_sac}, while its stability properties are reported in \cite{olfati-saber09_ieee_cdc, kamgarpour2008_ieee_cdc, khan2011_acc}.

Investigations of various distributed estimation schemes, which essentially have a similar structure as those in \cite{olfati-saber2005_ieee_cdc_ecc, olfati-saber07_ieee_cdc}, are then followed. The authors of \cite{matei2012_automatica} proposed a consensus-based linear observer and devised a method to obtain sub-optimal gain parameters. In \cite{khan2011_arxiv}, a consensus-based linear observer, which has a similar structure as one described in  \cite{matei2012_automatica}, is proposed where gain parameters are determined depending on the measurement matrix of the plant and the Laplacian matrix of the underlying communication graph. Other interesting approaches are reported in \cite{alriksson2006_mtns, khan2011_ieee_cdc, bai2011_acc}.

To achieve stability of state estimation, some of the existing distributed estimation algorithms require \textbf{(i)} strong observability conditions \cite{olfati-saber09_ieee_cdc}, \textbf{(ii)} multiple data fusion steps between two consecutive estimation steps \cite{kamgarpour2008_ieee_cdc, khan2011_acc}, which imposes a two-time-scale structure, or \textbf{(iii)} the verification of algebraic constraints \cite{bai2011_acc}, which is a stronger condition than the one presented in this paper.

\textbf{Comparison with prior publications by the authors:} The introduction of augmented states as in \eqref{eq_distributed_observer} was proposed in \cite{park2012_acc}, where we also provided sufficient conditions for the existence of omniscience-achieving parameters. In \cite{park2012_ieee_cdc} we developed necessary and sufficient conditions for the existence of omniscience-achieving gain matrices for the case where $\mathbf{W}$ is a pre-selected symmetric matrix. \underline{This paper \textit{extends} and \textit{unifies} our prior results in the following way}: We consider \textit{directed communication graphs}, which allows for asymmetric $\mathbf{W}$, and investigate necessary and sufficient conditions for the existence of omniscience-achieving schemes for which $\mathbf{W} =~\left( \mathbf{w}_{ij} \right)_{i,j \in \mathbb{V}}$ and $\left\{ \mathbf{K}_i, \mathbf{P}_i, \mathbf{Q}_i, \mathbf{S}_i \right\}_{i \in \mathbb{V}}$ in \eqref{eq_distributed_observer} are parameters that must be designed jointly.

\section{Main Result} \label{sec_main_result}
In this section, we present our main result and an example. We start with the following Definition of a source component of a graph.

\begin{mydef} \label{def_source_comp}
	Given a directed graph $\mathcal{G}= \left( \mathbb{V}, \mathbb{E} \right)$, a strongly connected component $\left( \mathbb{V}_c, \mathbb{E}_c \right)$ of $\mathcal{G}$ is said to be a \textit{source component} if there is no edge from $\mathbb{V} \setminus \mathbb{V}_c$ to $\mathbb{V}_c$ in $\mathcal{G}$.
\end{mydef}

The following is our main Theorem.

\begin{center}
	\line(1,0){450}
\end{center}
\begin{thm} [\textbf{Detectability Condition for Omniscience}] \label{thm_main}
	Suppose that the communication graph $\mathcal{G}= \left( \mathbb{V}, \mathbb{E} \right)$ is pre-selected, that the plant is given as in \eqref{eq_LTI_plant}, and that the following assumptions hold:
	\begin{itemize}
		\item[\textbf{(i)}] There are $s$ source components of $\mathcal{G}$, which are represented as $\left\{ \left( \mathbb{V}_i, \mathbb{E}_i \right) \right\}_{i \in \{1, \cdots, s\}}$.
		\item[\textbf{(ii)}] Each source component $i$ has an associated subsystem given by the pair $\left( A, H_{\mathbb{V}_i} \right)$.
	\end{itemize}
	There exist a choice of omniscience-achieving parameters $\mathbf{W}$ and $\{\mathbf{K}_i, \mathbf{P}_i, \mathbf{Q}_i, \mathbf{S}_i\}_{i \in \mathbb{V}}$ if and only if all subsystems $\left( A, H_{\mathbb{V}_i}\right)$ for $i \in \{1, \cdots, s\}$ are detectable.
\end{thm}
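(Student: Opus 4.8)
The plan is to turn omniscience into the asymptotic stability of an autonomous LTI \emph{error system} and then exploit the condensation (the DAG of strong components) of $\mathcal{G}$. Setting $e_i(k)=\hat{x}_i(k)-x(k)$ and $z=(z_i)_{i\in\mathbb{V}}$, and using $\sum_{j\in\mathbb{N}_i}\mathbf{w}_{ij}=1$ together with $y_i(k)-H_i\hat{x}_i(k)=-H_ie_i(k)$, equation \eqref{eq_distributed_observer} yields the parameter-dependent recursion
\begin{align*}
 e_i(k+1) &= A\sum_{j\in\mathbb{N}_i}\mathbf{w}_{ij}e_j(k)-\mathbf{K}_iH_ie_i(k)+\mathbf{P}_iz_i(k),\\
 z_i(k+1) &= -\mathbf{Q}_iH_ie_i(k)+\mathbf{S}_iz_i(k),
\end{align*}
which is autonomous and LTI in $(e,z)$ with state matrix built from $\mathbf{W}\otimes A$ plus block-diagonal corrections; hence a choice of parameters is omniscience-achieving iff it renders this system Schur. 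Reading $\mathbf{K}_iH_ie_i+\mathbf{P}_iz_i$ as a control injected into the $i$-th block and $H_ie_i$ (the measurement innovation, which observer $i$ can form) as the output of channel $i$, this is precisely a decentralized dynamic-output-feedback stabilization problem for the plant $\mathbf{W}\otimes A$, where the weight constraints restrict $\mathbf{W}$ to be ``row-stochastic'' ($\sum_j\mathbf{w}_{ij}=1$) with $\mathbf{w}_{ij}=0$ whenever $j\notin\mathbb{N}_i$.

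For \emph{necessity} I would argue that this holds for \emph{any} scheme subject to $\mathcal{G}$. Suppose some source-component pair $(A,H_{\mathbb{V}_i})$ is not detectable, and pick $0\neq v$ in the intersection of the unobservable subspace of $(A,H_{\mathbb{V}_i})$ with the unstable modal subspace of $A$, chosen to be a genuine eigenvector with eigenvalue $\lambda$, $|\lambda|\geq 1$ (possible since that intersection is $A$-invariant). Run the plant from $x(0)$ and from $x'(0)=x(0)+v$. Because $(\mathbb{V}_i,\mathbb{E}_i)$ is a source component, no edge enters $\mathbb{V}_i$, so by induction on time the joint states/estimates of the observers in $\mathbb{V}_i$ are functions only of $\{H_jx(\tau)\}_{j\in\mathbb{V}_i,\ \tau\leq k}=\{H_{\mathbb{V}_i}A^\tau x(0)\}_{\tau\leq k}$, which is identical in the two runs since $H_{\mathbb{V}_i}A^\tau v=0$ for all $\tau$. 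Hence every observer in $\mathbb{V}_i$ produces the same estimate in both runs, while $x(k)-x'(k)=\lambda^k v$ does not tend to $0$; omniscience is impossible. This gives the ``only if'' direction, so it remains to build parameters when all $(A,H_{\mathbb{V}_i})$ are detectable.

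For \emph{sufficiency} I would order the strong components $\mathcal{C}_1,\dots,\mathcal{C}_q$ of $\mathcal{G}$ so that every inter-component edge runs from a lower- to a higher-indexed component; the source components are exactly those with no incoming inter-component edge. Two cases: \textbf{(a)} if $\mathcal{C}_\ell$ is not a source, choose $b\in\mathcal{C}_\ell$ with an incoming edge from an earlier component, set $b$'s weight on that external neighbour to $1$ and all of $b$'s other parameters to zero (so $e_b(k+1)=Ae_{\mathrm{out}}(k)$), and, using strong connectivity of $\mathcal{C}_\ell$, build an out-tree rooted at $b$ inside $\mathcal{C}_\ell$ along which every other vertex likewise ``copies'' its parent; this makes the diagonal block of the error system indexed by $\mathcal{C}_\ell$ nilpotent, and \emph{no detectability is used}; \textbf{(b)} if $\mathcal{C}_\ell$ is a source it is decoupled from the rest of the network, and its error block is the error system of the strongly connected graph $(\mathbb{V}_\ell,\mathbb{E}_\ell)$ with detectable aggregated pair $(A,H_{\mathbb{V}_\ell})$, which the key lemma below handles. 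Ordering the global state by component makes the full $(e,z)$ matrix block lower-triangular with these Schur diagonal blocks, hence Schur, and omniscience follows (equivalently, one shows $e_j(k)\to 0$ by induction along the topological order).

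The \textbf{main obstacle} is the key lemma of step (b): given a \emph{strongly connected} $\mathcal{G}$ and a detectable pair $(A,H_{\mathbb{V}})$, construct a graph-patterned, row-stochastic $\mathbf{W}$ and block-diagonal $\{\mathbf{K}_i,\mathbf{P}_i,\mathbf{Q}_i,\mathbf{S}_i\}$ making the coupled $(e,z)$ dynamics Schur. Detectability of $(A,H_{\mathbb{V}})$ alone does not suffice once the augmented states are removed; the role of $(\mathbf{P}_i,\mathbf{Q}_i,\mathbf{S}_i)$ is exactly to add the dynamic degrees of freedom that let the information in each local innovation $H_ie_i$ circulate around the directed cycles of $\mathcal{G}$ and, in the closed loop, recombine into a centralized output-injection gain $L$ with $A-LH_{\mathbb{V}}$ Schur. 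I would make this rigorous via the decentralized-control route: choosing $\mathbf{W}$ appropriately (jointly with the gains), show that the structured plant $(\mathbf{W}\otimes A,\{B_i\},\{H_i\text{ on block }i\})$ has no decentralized fixed mode $\lambda$ with $|\lambda|\geq 1$ — strong connectivity of $\mathcal{G}$ and the pattern freedom in $\mathbf{W}$ being what rules such modes out precisely when $(A,H_{\mathbb{V}})$ is detectable — and then invoke the characterization of decentralized stabilizability by absence of unstable fixed modes to obtain the compensators, whose orders are the $\mu_i$. Establishing the fixed-mode computation for this Kronecker-structured, sparsity-constrained plant is the technical heart of the argument.
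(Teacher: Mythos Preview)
Your proposal is correct and follows the paper's strategy closely: cast omniscience as Schur stability of the coupled $(e,z)$ error system, exploit the block lower-triangular structure induced by the condensation of $\mathcal{G}$, and reduce the hard case (a source component) to a decentralized fixed-mode computation for the Kronecker plant $\mathbf{W}\otimes A$ with channels $(\overline{B}_i,\overline{H}_i)$.

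There are two genuine differences worth noting. First, your necessity argument is a direct indistinguishability argument on trajectories and therefore applies to \emph{any} estimation scheme respecting $\mathcal{G}$; the paper instead stays within the LTI class and shows, via the rank test of Proposition~\ref{prop_fixed_modes}, that an unstable fixed mode persists in the restricted error system~\eqref{eq_thm_existence_05}. Your route is cleaner and actually delivers property \textbf{(iii)} of Subsection~\ref{subsection_properties} for free. Second, for non-source vertices you make the diagonal block \emph{nilpotent} by putting all weight on a single tree-parent; the paper instead keeps positive self-weights but makes the diagonal of the lower-triangular $\mathbf{W}_{33}$ small enough that $\Lambda_U(\mathbf{W}_{33}\otimes A)=\emptyset$ (the ``auxiliary fact'' in Lemma~\ref{lemma_weight_matrix}). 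Both work; yours is slightly simpler, theirs keeps $\mathbf{W}$ entrywise positive on the graph pattern.

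Where your proposal is deliberately vague---``choosing $\mathbf{W}$ appropriately'' so that no unstable fixed mode survives in the strongly connected case---is exactly where the paper does real work. Their construction is: take $\mathbf{W}_i=I-\alpha_iL_i$ with $L_i$ a weighted Laplacian of the source component chosen (Theorem~\ref{thm_wlm}) so that all eigenvalues are simple and all left/right eigenvectors have no zero entry, and $\alpha_i$ chosen (Lemma~\ref{lemma_unique_rep}) so that $\mathbf{W}_i$ is stochastic and $\mathbf{W}_i\otimes A$ has the Unique Eigenvalue Product Property. These three properties force every unstable eigenvector of $\mathbf{W}_i\otimes A$ to factor as $v\otimes p$ with $v$ nowhere zero (Lemma~\ref{lemma_eigen_rep}), which is precisely what makes the rank test~\eqref{eq_thm_existence_03}--\eqref{eq_thm_existence_04} go through for every partition $\mathbb{J}$. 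If you carry your plan to completion, this spectral construction of $\mathbf{W}$ is the missing ingredient.
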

\begin{center}
	\line(1,0){450}
\end{center}

\begin{remark} \label{remark_param_computation}
	As will be discussed in Section \ref{sec_proof_main_thm}, once the detectability condition of Theorem~\ref{thm_main} is satisfied, under a proper choice of a weight matrix $\mathbf{W}$ (see the proof of Theorem \ref{thm_main} in Subsection~\ref{subsection_proof_thm_main}), we can compute omniscience-achieving gain matrices $\{\mathbf{K}_i, \mathbf{P}_i, \mathbf{Q}_i, \mathbf{S}_i\}_{i \in \mathbb{V}}$ via the methods proposed in \cite{wang1973_tac, davison1990_tac}.
\end{remark}

\tikzstyle{main node}=[circle, draw, text centered, thick]

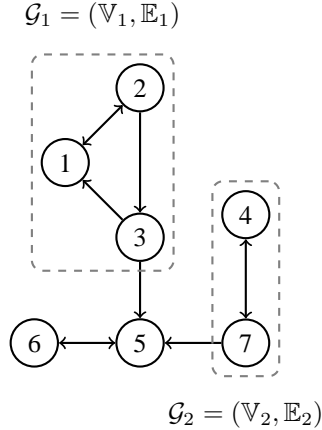
\begin{figure} [t]
	\centering
	\begin{tikzpicture} [node distance=40]
		\node[main node](1) {1};
		\node[main node](2) [above right of=1] {2};
		\node[main node](3) [below right of=1] {3};
		\node[main node](5) [below of=3] {5};
		\node[main node](6) [left of=5] {6};
		\node[main node](7) [right of=5] {7};
		\node[main node](4) [above=3em of 7] {4};

		\path [<->, thick]
			(1) edge (2)
			(4) edge (7)
			(5) edge (6);
			
		\path [->, thick]
			(2) edge (3)
			(3) edge (1)
			(3) edge (5)
			(7) edge (5);
			
		\node [rounded corners, draw=black!50, dashed, thick] (G1) [fit=(1) (2) (3)] {};
		\path (G1.north)+(0,.5) node (G1_eq) {$\mathcal{G}_1 = \left( \mathbb{V}_1, \mathbb{E}_1 \right)$};
		\node [rounded corners, draw=black!50, dashed, thick] (G2) [fit=(4) (7)] {};
		\path (G2.south)+(0,-.5) node (G2_eq) {$\mathcal{G}_2 = \left( \mathbb{V}_2, \mathbb{E}_2 \right)$};
	\end{tikzpicture}
	\caption {A communication graph $\mathcal{G}$ and its source components $\mathcal{G}_1$ and $\mathcal{G}_2$ for Example.}
	\label{figure_ex}
\end{figure}

\textit{\textbf{Example:}}	
	Consider the communication graph $\mathcal{G} = \left( \mathbb{V}, \mathbb{E} \right)$ depicted in Fig. \ref{figure_ex} and a LTI plant~\eqref{eq_LTI_plant} with $m=7$. By Definition \ref{def_source_comp}, we identify that $\mathcal{G}_1$ and $\mathcal{G}_2$ are the only source components of $\mathcal{G}$. Therefore, by Theorem \ref{thm_main}, we conclude that omniscience can be achieved if and only if $\left( A, H_{\mathbb{V}_1} \right)$ and $\left( A, H_{\mathbb{V}_2} \right)$ are both detectable.

\subsection{Additional properties and facts about the proposed class of distributed observers} \label{subsection_properties}
\begin{itemize}
\item[\textbf{(i)}] In the presence of process and measurement noises with finite second moment that enter linearly in \eqref{eq_LTI_plant}, the estimation error of our class of distributed observers has finite second moment.

\item[\textbf{(ii)}] Information is exchanged only among neighboring observers, and for achieving omniscience, it is sufficient to exchange local state estimates whose dimensions are equal to the order of the plant. 

\item[\textbf{(iii)}] If the detectability condition of Theorem \ref{thm_main} fails then there are neither omniscience-achieving parameters for \eqref{eq_distributed_observer}, nor any other nonlinear or time-varying omniscience-achieving scheme subject to the same communication graph.

\item[\textbf{(iv)}] Even though the optimization of the weight and gains, for instance, with respect to $H_{\infty}$ optimality criterion may be non-convex due to the sparse structure imposed on them, one may use a nonsmooth $H_{\infty}$ synthesis tool \cite{apkarian2006_ieee_tac, gahinet2011_ifac}, which is readily available in decentralized control literature, to obtain locally optimal solutions.

\item[\textbf{(v)}] We do not discuss the order of the observers, particularly the dimension of the augmented state $z_i$ as this issue has been explored in output feedback stabilization. For detailed discussion, the reader is referred to \cite{rosenthal1996_ieee_tac} and references therein.
\end{itemize}

\section{Application to the Design of Distributed Controllers} \label{sec_applications}
Consider a graph $\mathcal{G} = \left( \mathbb{V}, \mathbb{E} \right)$ and the following LTI plant with state $x(k)$, output vector $y(k)$, and inputs $\left\{u_i(k)\right\}_{i \in \mathbb{V}}$.
\begin{equation} \label{eq_LTI_plant_input}
	\begin{split}
		x(k+1) &= Ax(k) + \sum_{i \in \mathbb{V}} B_iu_i(k) \\
		y_i(k) &= H_i x(k), ~ i \in \mathbb{V}
	\end{split}
\end{equation}

In this section, we consider a distributed control problem as an application of the proposed estimation scheme. We focus on designing $m$ LTI controllers in which information exchange conforms with $\mathcal{G}$ and each controller has the following state-space representation:
\begin{equation} \label{eq_dist_controller}
	\begin{split}
		\xi_i (k+1) &= \sum_{j \in \mathbb{N}_i} \mathbf{S}_j^c \xi_j (k) + \mathbf{Q}_i^c y_i(k), ~ i \in \mathbb{V} \\
		u_i(k) &= \sum_{j \in \mathbb{N}_i} \mathbf{P}_j^c \xi_j (k) + \mathbf{K}_i^c y_i (k)
	\end{split}
\end{equation}
where $\xi_i$ is the internal state of controller $i$. We refer to $\mathcal{G}$ and the collection of all controllers as a \textit{distributed controller}.

\textbf{The goal} is to determine conditions for the existence of a distributed controller that stabilizes the plant \eqref{eq_LTI_plant_input} and to devise a method to compute $\left\{ \mathbf{K}_i^c, \mathbf{P}_i^c, \mathbf{Q}_i^c, \mathbf{S}_i^c \right\}_{i \in \mathbb{V}}$ if one exists\footnote{In \cite{wang1973_tac, davison1990_tac}, a problem of stabilizing a LTI plant via completely decoupled controllers, i.e., $\mathbb{N}_i = \{i\}$ for all $i \in \mathbb{V}$, is studied. In recent work \cite{pajic2011_ieee_cdc-ecc}, an idea of adopting a Wireless Control Network (WCN) is proposed where the WCN is a LTI system that bridges the plant and decoupled controllers. }. To do so, we will make use of the distributed observer described in Section \ref{sec_formulation}.

The following Proposition specifies sufficient conditions for the existence of a stabilizing distributed controller, where the computation of the parameters $\left\{ \mathbf{K}_i^c, \mathbf{P}_i^c, \mathbf{Q}_i^c, \mathbf{S}_i^c \right\}_{i \in \mathbb{V}}$ is  described in the proof of the Proposition (see Subsection \ref{sec_proof_prop_dist_cont}).

\begin{prop} \label{prop_dist_cont}
	Let a graph $\mathcal{G} = \left( \mathbb{V}, \mathbb{E} \right)$ and a LTI plant \eqref{eq_LTI_plant_input} be given. Suppose the following assumptions hold:
	\begin{itemize}
		\item[\textbf{(i)}] The plant is stabilizable.
		\item[\textbf{(ii)}] The graph $\mathcal{G}$ and the pair $\left(A, H \right)$ satisfy the detectability condition of Theorem \ref{thm_main}.
	\end{itemize}
	There exists a distributed controller \eqref{eq_dist_controller} that stabilizes the plant.
\end{prop}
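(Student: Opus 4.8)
The plan is to reduce the distributed stabilization problem to the distributed estimation problem already solved by Theorem~\ref{thm_main}, using a certainty-equivalence (observer-plus-static-feedback) architecture. First I would invoke assumption~\textbf{(i)}: since $(A,B)$ with $B = \begin{pmatrix} B_1 & \cdots & B_m\end{pmatrix}$ is stabilizable, there exists a (centralized) gain $F = \begin{pmatrix} F_1^T & \cdots & F_m^T \end{pmatrix}^T$ such that $A + \sum_{i\in\mathbb{V}} B_i F_i = A + BF$ is Schur stable. The intended control law at node $i$ is $u_i(k) = F_i \hat{x}_i(k)$, where $\hat{x}_i(k)$ is the local state estimate produced by a distributed observer of the form \eqref{eq_distributed_observer} running on the same graph $\mathcal{G}$ and driven by the measured outputs $y_i(k) = H_i x(k)$. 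By assumption~\textbf{(ii)} the graph $\mathcal{G}$ and the pair $(A,H)$ satisfy the detectability condition of Theorem~\ref{thm_main}, so omniscience-achieving parameters $\mathbf{W}$ and $\{\mathbf{K}_i,\mathbf{P}_i,\mathbf{Q}_i,\mathbf{S}_i\}_{i\in\mathbb{V}}$ exist for that observer.

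Next I would verify that this closed loop is realizable in the form \eqref{eq_dist_controller}. Substituting $u_i = F_i \hat{x}_i$ into the observer equations \eqref{eq_distributed_observer}, the internal controller state is $\xi_i = \big(\hat{x}_i^T, z_i^T\big)^T$; the $\hat{x}_i$-update already mixes neighbors' $\hat{x}_j$ through $\mathbf{w}_{ij}$ and the innovation $y_i - H_i\hat{x}_i$, which is affine in the local output $y_i$ and in $\xi_j$ for $j\in\mathbb{N}_i$. Collecting terms shows that the $\mathbf{S}_j^c,\mathbf{P}_j^c,\mathbf{Q}_i^c,\mathbf{K}_i^c$ can be read off directly, and crucially the coupling respects $\mathbb{N}_i$ — no information beyond neighbors is used — so the sparsity pattern of \eqref{eq_dist_controller} is honored. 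This step is essentially bookkeeping.

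The substantive step is the stability analysis of the interconnection of plant and distributed controller. I would pass to error coordinates $e_i(k) = \hat{x}_i(k) - x(k)$ and stack $e = (e_i)_{i\in\mathbb{V}}$, $z = (z_i)_{i\in\mathbb{V}}$. The true plant state then evolves as $x(k+1) = (A+BF)x(k) + \sum_i B_i F_i e_i(k)$, while the stacked estimation-error dynamics $(e,z)$ are exactly the autonomous error system of the distributed observer, which is Schur stable by the choice of omniscience-achieving parameters (this is where Theorem~\ref{thm_main} and Remark~\ref{remark_param_computation} are used to actually obtain the gains, e.g.\ via \cite{wang1973_tac, davison1990_tac}). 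The overall closed-loop system matrix is block upper- (or lower-) triangular in the partition $\big(x,(e,z)\big)$: the $(e,z)$-block is Schur, the $x$-block is $A+BF$ which is Schur, and the off-diagonal block does not affect the spectrum. Hence all eigenvalues lie strictly inside the unit disk and the plant state $x(k)\to 0$; since $\hat{x}_i \to x \to 0$ and $\xi_i = (\hat{x}_i^T,z_i^T)^T \to 0$, the full interconnection is asymptotically stable, proving the Proposition.

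The main obstacle I anticipate is not the triangular-spectrum argument itself but making the reduction airtight: one must check that the observer \eqref{eq_distributed_observer}, which in Section~\ref{sec_formulation} is defined for an \emph{autonomous} plant \eqref{eq_LTI_plant}, still has the same error dynamics when the plant \eqref{eq_LTI_plant_input} has inputs $u_i$ — this works precisely because each $u_i$ is known locally (it is computed by controller $i$ from $\hat{x}_i$), so one can add the known term $\sum_i B_i u_i$ into each observer's prediction step without needing any non-neighbor information, keeping the error recursion autonomous and unchanged. I would state this augmented observer explicitly and confirm its error dynamics coincide with the autonomous case before quoting Theorem~\ref{thm_main}.
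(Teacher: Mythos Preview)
Your certainty-equivalence plan has a genuine gap, and it is exactly at the point you flagged as the ``main obstacle.'' You claim that each observer can add the known term $\sum_{j\in\mathbb{V}} B_j u_j(k)$ to its prediction step ``without needing any non-neighbor information.'' This is false: controller~$i$ computes only $u_i$, and through the communication graph it has access at best to $\xi_j$ (hence $u_j$) for $j\in\mathbb{N}_i$. For $j\notin\mathbb{N}_i$ the input $u_j$ is unavailable at node~$i$. Consequently observer~$i$ cannot cancel the full input term in $x(k+1)=Ax(k)+\sum_j B_j u_j(k)$, the estimation-error recursion is \emph{not} autonomous, and the block-triangular structure you rely on for the spectrum argument collapses. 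The closed loop in $(x,e,z)$ coordinates is genuinely coupled in both directions.

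This is precisely why the paper does \emph{not} attempt a separation-principle argument. Its proof runs the observer \eqref{eq_distributed_observer} without any input compensation, accepts that $\hat{x}_i$ will not track $x$ once controls are applied, and instead views the cascade \eqref{eq_plant_observer} of plant and observer as a new LTI plant with inputs $u_i$ and outputs $\hat{x}_i$. Lemma~\ref{lemma_dist_control} shows this augmented plant is stabilizable and detectable through every single channel $i$; then the decentralized stabilization results of \cite{wang1973_tac,davison1990_tac} furnish completely decoupled dynamic controllers \eqref{eq_decen_controller_app} from $\hat{x}_i$ to $u_i$, which introduces the extra state $w_i$ in \eqref{eq_dist_controller_particular}. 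In short, the paper replaces your static gain $u_i=F_i\hat{x}_i$ by a dynamic local controller designed for the plant/observer composite, and never needs the observer error to be decoupled from the inputs.
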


\begin{remark}
	Notice that the aforementioned controllers share internal states within neighborhood defined by $\mathcal{G}$. We argue that  this scheme performs better than one that shares local measurements. For the sake of comparison, we consider controllers of the following form. Since they are sharing local measurements $y_i$ within neighborhood, we refer to them as the measurement-sharing controllers.
	\begin{equation} \label{eq_decent_controller}
		\begin{split}
			\xi_i (k+1) &= \mathbf{S}_i^c \xi_i (k) + \sum_{j \in \mathbb{N}_i} \mathbf{Q}_j^c y_j(k), ~ i \in \mathbb{V} \\
			u_i(k) &= \mathbf{P}_i^c \xi_i (k) + \sum_{j \in \mathbb{N}_i} \mathbf{K}_j^c y_j (k)
		\end{split}
	\end{equation}
	It can be verified that the assumptions \textbf{(i)} and \textbf{(ii)} of Proposition \ref{prop_dist_cont} are not sufficient for the existence of the measurement-sharing controllers \eqref{eq_decent_controller} which stabilize the plant \eqref{eq_LTI_plant_input} (see Corollary~2 of \cite{lavaei2008_automatica} for the stabilizability condition for a LTI plant via the measurement-sharing controllers)\footnote{In what regards to stability, the state-sharing controllers \eqref{eq_dist_controller} outperform the measurement-sharing controllers \eqref{eq_decent_controller} in the following sense: Under the same information exchange constraint if a plant can be stabilized by the measurement-sharing controllers than it can always be stabilized by the state-sharing controllers, but not vice versa. We omit the detail for brevity.}.
\end{remark}

To prove Proposition \ref{prop_dist_cont}, we consider a set of controllers governed by the following state-space equation. Notice that this is a special choice of \eqref{eq_dist_controller}.


\begin{equation} \label{eq_dist_controller_particular}
	\begin{split}
		\begin{pmatrix} \hat{x}_i(k+1) \\  z_i(k+1) \\ w_i(k+1) \end{pmatrix} 
			&= \begin{pmatrix} \sum_{j \in \mathbb{N}_i} \mathbf{w}_{ij} A \hat{x}_j(k) - \mathbf{K}_i \left( y_i(k) - H_i \hat{x}_i(k) \right) + \mathbf{P}_i z_i(k) \\
				-\mathbf{Q}_i \left( y_i(k) - H_i\hat{x}_i(k) \right) + \mathbf{S}_i z_i(k) \\
				\mathbf{Q}_i^d \hat{x}_i(k) + \mathbf{S}_i^d w_i(k)
		 \end{pmatrix}, ~ i \in \mathbb{V} \\
		u_i(k) &= \mathbf{K}_i^d \hat{x}_i(k) + \mathbf{P}_i^d w_i(k)
	\end{split}
\end{equation}

In what follows, we first consider a choice of $\mathbf{W} = \left( \mathbf{w}_{ij} \right)_{i,j \in \mathbb{V}}$ and $\left\{ \mathbf{K}_i, \mathbf{P}_i, \mathbf{Q}_i, \mathbf{S}_i\right\}_{i \in \mathbb{V}}$ (Step~I), and we consider a choice of $\left\{ \mathbf{K}^{d}_i, \mathbf{P}^{d}_i, \mathbf{Q}^{d}_i, \mathbf{S}^{d}_i\right\}_{i \in \mathbb{V}}$ (Step II). The proof of Proposition~\ref{prop_dist_cont} is then followed.

\subsection{Step I}
Consider the following LTI system with state $\begin{pmatrix} x^T(k) & \hat{x}^T(k) & z^T(k) \end{pmatrix}^T$, output vector $\hat{x}_i(k)$, and inputs $\{u_j(k)\}_{j \in \mathbb{V}}$:

\begin{align} \label{eq_plant_observer}
	\begin{split}
		\begin{pmatrix} x(k+1) \\ \hat{x}(k+1) \\ z(k+1) \end{pmatrix}
		&= \begin{pmatrix}
			A & \mathbf{0} & \mathbf{0} \\ 
			\overline{\mathbf{K}} ~ \overline{H} \left( \mathbf{1} \otimes I_n \right) & \mathbf{W} \otimes A - \overline{\mathbf{K}} ~ \overline{H} & \overline{\mathbf{P}} \\ 
			\overline{\mathbf{Q}} ~ \overline{H} \left( \mathbf{1} \otimes I_n \right) & - \overline{\mathbf{Q}} ~ \overline{H} & \overline{\mathbf{S}}
			\end{pmatrix} \begin{pmatrix} x(k) \\ \hat{x}(k) \\ z(k) \end{pmatrix} + \begin{pmatrix} \sum_{j \in \mathbb{V}} B_j u_j(k) \\ \mathbf{0} \\ \mathbf{0} \end{pmatrix} \\
	\hat{x}_i(k) &= \begin{pmatrix} \mathbf{0} & \cdots & I_n & \cdots & \mathbf{0} \end{pmatrix} \hat{x}(k), ~ i \in \mathbb{V}
	\end{split}
\end{align}
with
\begin{equation} \label{eq_plant_observer_02}
	\begin{split}
		&\hat{x} = \left( \hat{x}_1^T, \cdots, \hat{x}_m^T \right)^T, \quad z = \left( z_1^T, \cdots, z_m^T \right)^T \\
		&\overline{H} = \begin{pmatrix} \overline{H}_1^T & \cdots & \overline{H}_m^T \end{pmatrix}^T \text{ with } \overline{H}_i = e_i^T \otimes H_i \\
		& \mathbf{W} = \left( \mathbf{w}_{ij} \right)_{i,j \in \mathbb{V}} \\
		&\overline{\mathbf{K}} = diag \left(\mathbf{K}_1, \cdots, \mathbf{K}_m \right),
		\quad \overline{\mathbf{P}} = diag \left(\mathbf{P}_1, \cdots, \mathbf{P}_m \right) \\
		&\overline{\mathbf{Q}} = diag \left(\mathbf{Q}_1, \cdots, \mathbf{Q}_m \right),
		\quad \overline{\mathbf{S}} = diag \left(\mathbf{S}_1, \cdots, \mathbf{S}_m \right) \\
	\end{split}
\end{equation}
where $e_i$ is the $i$-th column of the $m$-dimensional identity matrix. Notice that \eqref{eq_plant_observer} is obtained by interconnecting the plant \eqref{eq_LTI_plant_input} and distributed observer \eqref{eq_distributed_observer}. We refer to this system as a \textit{plant/observer system}. 

The following Lemma states the stabilizability and detectability of the plant/observer system.
\begin{lemma} \label{lemma_dist_control}
	Let a graph $\mathcal{G} = \left( \mathbb{V}, \mathbb{E} \right)$ and a LTI plant \eqref{eq_LTI_plant_input} be given. Suppose that the assumptions \textbf{(i)} and \textbf{(ii)} of Proposition \ref{prop_dist_cont} hold. We can find $\mathbf{W}$, $\overline{\mathbf{K}}$, $\overline{\mathbf{P}}$, $\overline{\mathbf{Q}}$, $\overline{\mathbf{S}}$ in \eqref{eq_plant_observer} for which the resultant plant/observer system is both stabilizable and detectable for all $i \in \mathbb{V}$.
\end{lemma}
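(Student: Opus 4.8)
The plan is to exploit the triangular structure of the system matrix in \eqref{eq_plant_observer}. Observe that the $x$-block evolves autonomously (driven only by the inputs $u_j$), and the $(\hat{x},z)$-block is driven by $x$ through the coupling term $\overline{\mathbf{K}}\,\overline{H}(\mathbf{1}\otimes I_n)$ and $\overline{\mathbf{Q}}\,\overline{H}(\mathbf{1}\otimes I_n)$. The key change of variables is to pass to the estimation-error coordinates $e_i(k) = \hat{x}_i(k) - x(k)$, stacked as $e = \hat{x} - (\mathbf{1}\otimes I_n)x$. Under the constraint $\sum_{j\in\mathbb{N}_i}\mathbf{w}_{ij}=1$ one has $(\mathbf{W}\otimes A)(\mathbf{1}\otimes I_n) = \mathbf{1}\otimes A$, so in the $(x,e,z)$ coordinates the system becomes block lower/upper triangular with diagonal blocks $A$ (the plant) and the "error dynamics" block governing $(e,z)$, namely
\begin{equation*}
\begin{pmatrix} e(k+1) \\ z(k+1)\end{pmatrix}
= \begin{pmatrix} \mathbf{W}\otimes A - \overline{\mathbf{K}}\,\overline{H} & \overline{\mathbf{P}} \\ -\overline{\mathbf{Q}}\,\overline{H} & \overline{\mathbf{S}}\end{pmatrix}
\begin{pmatrix} e(k) \\ z(k)\end{pmatrix},
\end{equation*}
with the input $\sum_j B_j u_j$ entering only the $x$-equation and the $u$-coupling into the $e$-equation being $-\sum_j B_j u_j$ (from $\hat x$ minus $x$). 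This is precisely the matrix whose stability is the subject of Theorem~\ref{thm_main}: omniscience of \eqref{eq_distributed_observer} is by definition the asymptotic stability of the error system. Hence by assumption \textbf{(ii)} and Theorem~\ref{thm_main} we may choose $\mathbf{W}$ (the "proper choice of weight matrix" referenced in Remark~\ref{remark_param_computation}) and gains $\overline{\mathbf{K}},\overline{\mathbf{P}},\overline{\mathbf{Q}},\overline{\mathbf{S}}$ so that the $(e,z)$-block is Schur stable.

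With that choice fixed, I would verify stabilizability and detectability via the Hautus/PBH test applied to the triangular system. For stabilizability: the only possibly-unstable modes live in the $x$-block (the $A$ matrix), since the $(e,z)$-block is Schur; and assumption \textbf{(i)}, stabilizability of $(A, B)$ with $B = \begin{pmatrix}B_1 & \cdots & B_m\end{pmatrix}$, together with the fact that the composite input matrix has $\sum_j B_j u_j$ in the top block and $-\sum_j B_j u_j$ in the $e$-block, gives stabilizability of the whole system — one checks the PBH matrix has full row rank at every $\lambda$ with $|\lambda|\ge 1$ by a row-reduction that cancels the $e$-block contribution against the $x$-block using the $x$-block's own reachability. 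For detectability: the output is $\hat{x}_i$ for $i\in\mathbb{V}$, i.e. the full stacked $\hat{x} = e + (\mathbf{1}\otimes I_n)x$, so the output matrix sees all of $\hat x$; any unobservable unstable mode would have to be invisible in $\hat x$, but since the $(e,z)$-block is Schur the only unstable modes are in $x$, and those appear in $\hat{x}$ through the $(\mathbf{1}\otimes I_n)x$ term — again a PBH computation confirms no unstable unobservable modes. (Here it is essential that Lemma~\ref{lemma_dist_control} only claims detectability from the output $\hat x_i$ collectively, matching the output equation in \eqref{eq_plant_observer}.)

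The main obstacle, and the step deserving the most care, is the bookkeeping in the PBH computations: the system matrix is genuinely $3\times 3$ in block form, not strictly triangular, because $\overline{\mathbf{P}}$ and $-\overline{\mathbf{Q}}\,\overline{H}$ couple $e$ and $z$ in both directions — so one must first argue (via Theorem~\ref{thm_main}) that the \emph{joint} $(e,z)$ dynamics is Schur, and only then treat $(e,z)$ as a single stable block sitting below the $A$-block. A secondary subtlety is confirming that the Kronecker identity $(\mathbf{W}\otimes A)(\mathbf{1}\otimes I_n) = \mathbf{1}\otimes A$ really holds under $\sum_{j\in\mathbb{N}_i}\mathbf{w}_{ij}=1$ (equivalently $\mathbf{W}\mathbf{1}=\mathbf{1}$), which is what makes the error change of variables decouple $x$ from $(e,z)$ in the homogeneous dynamics; without it the cross-term would not vanish and the triangular argument would fail. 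Once these two points are pinned down, stabilizability and detectability follow from assumptions \textbf{(i)} and \textbf{(ii)} essentially by inspection of the block-triangular PBH matrices.
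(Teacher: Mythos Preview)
Your proposal is correct and follows essentially the same approach as the paper: choose $\mathbf{W}$ and the gains via Theorem~\ref{thm_main} so that the error block $\begin{pmatrix}\mathbf{W}\otimes A - \overline{\mathbf{K}}\,\overline{H} & \overline{\mathbf{P}}\\ -\overline{\mathbf{Q}}\,\overline{H} & \overline{\mathbf{S}}\end{pmatrix}$ is Schur, then read off stabilizability from assumption~\textbf{(i)} on $(A,B)$ and detectability from the fact that $\hat{x}_i$ tracks $x$. The paper's argument is terser --- it simply notes that with $u_i\equiv 0$ one has $\hat{x}_i(k)\to x(k)$ and $z_i(k)\to 0$, which forces any unstable mode to be visible in $\hat{x}_i$ --- whereas you make the block-triangular structure explicit via the error coordinates and invoke PBH; but the underlying idea is the same.
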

\begin{proof}
	First of all, notice that since \textbf{(ii)} of Proposition \ref{prop_dist_cont} holds, using Theorem \ref{thm_main} and Remark \ref{remark_param_computation}, one can find $\mathbf{W}$, $\overline{\mathbf{K}}$, $\overline{\mathbf{P}}$, $\overline{\mathbf{Q}}$, $\overline{\mathbf{S}}$ such that the matrix $\begin{pmatrix} \mathbf{W} \otimes A - \overline{\mathbf{K}} ~ \overline{H} & \overline{\mathbf{P}} \\ -\overline{\mathbf{Q}} ~ \overline{H} & \overline{\mathbf{S}} \end{pmatrix}$ is stable. Under this choice of $\mathbf{W}$, $\overline{\mathbf{K}}$, $\overline{\mathbf{P}}$, $\overline{\mathbf{Q}}$, $\overline{\mathbf{S}}$, we show the stabilizability and detectability of the resultant plant/observer system.

	The stabilizability directly follows from the stabilizability of the plant (\textbf{(i)} of Proposition \ref{prop_dist_cont}). The detectability can be proved by observing the fact that if $u_i =~0$ for all $i \in \mathbb{V}$, then it holds that $\hat{x}_i(k) \xrightarrow[k \rightarrow \infty] {} x(k)$ and $z_i(k) \xrightarrow[k \rightarrow \infty] {} 0$ for all $i \in \mathbb{V}$.

\end{proof}

\subsection{Step II}
Consider a set of decoupled controllers whose state-space representation is given as follows:
\begin{equation} \label{eq_decen_controller_app}
	\begin{split}
		w_i(k+1) &= \mathbf{S}_i^{d} w_i(k) + \mathbf{Q}_i^{d} \hat{x}_i(k), ~ i \in \mathbb{V} \\
		u_i(k) &= \mathbf{P}_i^{d} w_i(k) + \mathbf{K}_i^{d} \hat{x}_i(k)
	\end{split}
\end{equation}

Suppose that the controllers \eqref{eq_decen_controller_app} are applied to the plant/observer system \eqref{eq_plant_observer}. It can be verified that by the results of \cite{wang1973_tac, davison1990_tac}, if the plant/observer system \eqref{eq_plant_observer} is stabilizable and detectable for all $i \in \mathbb{V}$, one can find $\left\{ \mathbf{K}^{d}_i, \mathbf{P}^{d}_i, \mathbf{Q}^{d}_i, \mathbf{S}^{d}_i\right\}_{i \in \mathbb{V}}$ for which the resultant controllers stabilize the plant/observer system.

\subsection{Proof of Proposition \ref{prop_dist_cont}} \label{sec_proof_prop_dist_cont}
Suppose that the assumptions \textbf{(i)} and \textbf{(ii)} of Proposition \ref{prop_dist_cont} are satisfied. First we observe that an interconnection of \eqref{eq_LTI_plant_input} and \eqref{eq_dist_controller_particular} is equivalent to that of \eqref{eq_plant_observer} and \eqref{eq_decen_controller_app}. By following the procedures described in Step I and Step II, we can verify that with the certain choice of parameters, \eqref{eq_dist_controller_particular} stabilizes the plant \eqref{eq_LTI_plant_input}. Since \eqref{eq_dist_controller_particular} is a particular choice of \eqref{eq_dist_controller}, this proves the existence of a distributed controller \eqref{eq_dist_controller} that stabilizes the plant, which completes the proof of the Proposition.

\section{Proof of Main Theorem} \label{sec_proof_main_thm}

\tikzstyle{square_box}=[draw, rounded corners, text centered, very thick, minimum width=7em]

\def\boxdistance{-.6}
\def\blockdistance{3.5}

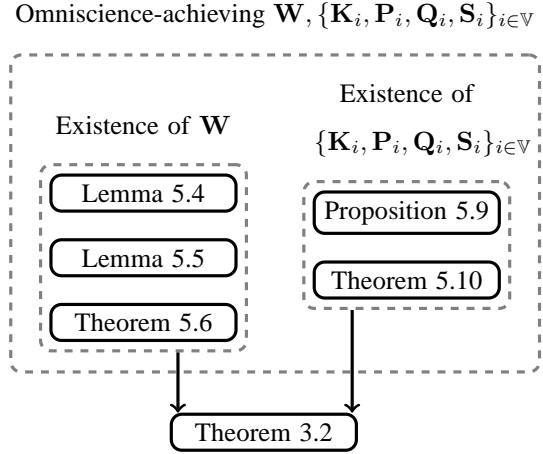
\begin{figure} [t]
	\centering
	\begin{tikzpicture} [node distance=40]
		\node[square_box](lem_1) {Lemma \ref{lemma_unique_rep}};
		\path (lem_1.south)+(0,\boxdistance) node [square_box] (lem_2) {Lemma \ref{lemma_eigen_rep}};
		\path (lem_2.south)+(0,\boxdistance) node [square_box] (thm_1) {Theorem \ref{thm_wlm}};
		\node [rounded corners, draw=black!50, dashed, very thick] (B1) [fit=(lem_1) (lem_2) (thm_1)] {};
		\path (B1.north)+(0,.5) node (descrip_1) {Existence of $\mathbf{W}$};
		
		\path (lem_1.south)+(\blockdistance,0) node [square_box] (prop_1) {Proposition \ref{prop_fixed_modes}};
		\path (prop_1.south)+(0,\boxdistance) node [square_box] (thm_2) {Theorem \ref{thm_decentralized_controller}};
		\node [rounded corners, draw=black!50, dashed, very thick] (B2) [fit=(prop_1) (thm_2)] {};
		\path (B2.north)+(0,.8) node [text centered, text width=7em]  (descrip_2) {Existence of $\{\mathbf{K}_i, \mathbf{P}_i, \mathbf{Q}_i, \mathbf{S}_i\}_{i \in \mathbb{V}}$};
		
		\node [rounded corners, draw=black!50, dashed, very thick, minimum width=20em, minimum height=12em] (B3) [fit=(descrip_1) (descrip_2) (B1) (B2)] {};
		\path (B3.north)+(0,.5) node [text centered]  (descrip_3) {Omniscience-achieving $\mathbf{W}, \{\mathbf{K}_i, \mathbf{P}_i, \mathbf{Q}_i, \mathbf{S}_i\}_{i \in \mathbb{V}}$};
		
		\path (thm_1.south)+(1.6,2*\boxdistance) node [square_box] (thm_3) {Theorem \ref{thm_main}};
		
		\draw [->, very thick] (B1.-70) --  (thm_3.north -| B1.-70);
		\draw [->, very thick] (B2.-130) --  (thm_3.north -| B2.-130);
		
	\end{tikzpicture}
	\caption {Precedence diagram for the proof of Theorem \ref{thm_main}}
	\label{figure_diagram}
\end{figure}

In this section, we present a two-part proof for Theorem \ref{thm_main}. The first part consists of  Lemma~\ref{lemma_unique_rep}, Lemma~\ref{lemma_eigen_rep}, and Theorem \ref{thm_wlm} that determine conditions for the existence of a suitable weight matrix $\mathbf{W}$ endowed with particular spectral properties. Given a suitable weight matrix, the second part, which consists of Proposition \ref{prop_fixed_modes} and Theorem~\ref{thm_decentralized_controller}, determines conditions for the existence of gain matrices $\{\mathbf{K}_i, \mathbf{P}_i, \mathbf{Q}_i, \mathbf{S}_i\}_{i \in \mathbb{V}}$ that, in conjunction with the given $\mathbf{W}$, are omniscience-achieving. The structure of the proof is outlined in the diagram of Fig. \ref{figure_diagram}.

\subsection{Useful Results on Weighted Laplacian Matrices}
\begin{mydef} \label{def_wlm}
	Consider a graph $\mathcal{G} = \left( \mathbb{V}, \mathbb{E} \right)$. A matrix $L = (l_{ij})_{i,j \in \mathbb{V}} \in~\mathbb{R}^{m \times m}$ is said to be a \textit{Weighted Laplacian Matrix (WLM)} of $\mathcal{G}$ if the following three properties hold:
	\begin{itemize}
		\item[\textbf{(i)}] If $(i,j) \notin \mathbb{E}$ then $l_{ji} = 0$ for $i \in \mathbb{V}$ and $j \in \mathbb{V} \setminus \{i\}$.
		\item[\textbf{(ii)}] If $(i,j) \in \mathbb{E}$ then $l_{ji} < 0$ for $i \in \mathbb{V}$ and $j \in \mathbb{V} \setminus \{i\}$.
		\item[\textbf{(iii)}] It holds that $\sum_{j \in \mathbb{V}} l_{ij} = 0$ for $i \in \mathbb{V}$.
	\end{itemize}
	For notational convenience, we define the following set of WLMs of $\mathcal{G}$:
$$\mathbb{L}(\mathcal{G}) \overset{def}{=} \{L \in \mathbb{R}^{m \times m} \mid L \text{ is a WLM of } \mathcal{G}\}$$
\end{mydef}

\begin{mydef} \label{def_rooted_tree}
	A directed graph $\mathcal{T} = \left( \mathbb{V}_\mathcal{T}, \mathbb{E}_\mathcal{T} \right)$ is said to be a \textit{rooted tree} if $\mathcal{T}$ has $\left( |\mathbb{V}_\mathcal{T}|-1 \right)$ edges and there exists a vertex $r \in \mathbb{V}_\mathcal{T}$, called a root of $\mathcal{T}$, such that for every $v \in \mathbb{V}_\mathcal{T} \setminus \{r\}$, there exists a directed path from root~$r$ to vertex $v$.
\end{mydef}

\begin{mydef} [\textbf{UEPP}] \label{def_uepp}
	Given square matrices $A$ and $B$, $A \otimes B$ is said to satisfy the so called \textit{Unique Eigenvalue Product Property (UEPP)} if every nonzero eigenvalue $\lambda$ of $A \otimes B$ can be uniquely expressed as a product $\lambda = \lambda_A \cdot \lambda_B$, where $\lambda_A$ and $\lambda_B$ are eigenvalues of $A$ and $B$, respectively\footnote{For an eigenvalue $\lambda \in sp(A \otimes B)$, let $\lambda_A, \lambda_A' \in sp(A)$ and $\lambda_B, \lambda_B' \in sp(B)$ for which $\lambda = \lambda_A \cdot \lambda_B = \lambda_A' \cdot \lambda_B'$. The eigenvalue $\lambda$ is said to be uniquely expressed as a product $\lambda=\lambda_A \cdot \lambda_B$ if it holds that $\lambda_A=\lambda_A'$ and $\lambda_B=\lambda_B'$.}.
\end{mydef}

\begin{lemma} \label{lemma_unique_rep}
	Given matrices $A \in \mathbb{R}^{n \times n}$ and $L \in \mathbb{R}^{m \times m}$, consider that $\mathbf{W}$ is of the form $\mathbf{W}=~I_m-~\alpha L$ where $\alpha$ is a positive real number. The following are true:
	\begin{itemize}
		\item[\textbf{(i)}] There is a positive real $\alpha$ such that $\mathbf{W} \otimes A$ satisfies the UEPP.
		\item[\textbf{(ii)}] If $L$ is a WLM of a graph, then for some positive $\alpha$, $\mathbf{W}=~I_m -~\alpha L$ becomes a stochastic matrix and $\mathbf{W} \otimes A$ satisfies the UEPP.
	\end{itemize}
\end{lemma}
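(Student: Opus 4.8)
The plan is to reduce both parts to a genericity argument about a single scalar parameter $\alpha$. For part \textbf{(i)}, let $\mu_1,\dots,\mu_p$ be the distinct eigenvalues of $L$, so that the eigenvalues of $\mathbf{W}=I_m-\alpha L$ are $1-\alpha\mu_1,\dots,1-\alpha\mu_p$ (with multiplicities inherited from $L$). The nonzero eigenvalues of $\mathbf{W}\otimes A$ are exactly the products $(1-\alpha\mu_i)\lambda_j$ where $\lambda_j$ ranges over the nonzero eigenvalues of $A$. The UEPP fails precisely when two distinct such products coincide, i.e. when $(1-\alpha\mu_i)\lambda_j=(1-\alpha\mu_{i'})\lambda_{j'}$ for some $(i,j)\neq(i',j')$ with $\lambda_j,\lambda_{j'}\neq 0$. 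For each such offending pair this is a single polynomial equation in $\alpha$ of degree at most one (it is $\alpha(\mu_{i'}\lambda_{j'}-\mu_i\lambda_j)=\lambda_{j'}-\lambda_j$), hence it is either violated for all $\alpha$ or holds only on a set of measure zero — unless it is satisfied identically, which happens only when $\mu_i\lambda_j=\mu_{i'}\lambda_{j'}$ \emph{and} $\lambda_j=\lambda_{j'}$ simultaneously, forcing $\mu_i=\mu_{i'}$, i.e. $i=i'$, contradicting $(i,j)\neq(i',j')$. So no bad equation is identically satisfied, the union of the finitely many bad $\alpha$-sets has measure zero, and any $\alpha$ outside it — in particular some positive $\alpha$ — makes $\mathbf{W}\otimes A$ satisfy the UEPP.

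For part \textbf{(ii)}, the additional requirement is that $\mathbf{W}=I_m-\alpha L$ be a stochastic matrix. Because $L$ is a WLM, property \textbf{(iii)} of Definition \ref{def_wlm} gives that each row of $I_m-\alpha L$ sums to $1$ for every $\alpha$, properties \textbf{(i)}–\textbf{(ii)} give that the off-diagonal entries of $I_m-\alpha L$ are $\geq 0$ for every $\alpha>0$ (they equal $-\alpha l_{ji}\geq 0$), and the diagonal entries $1-\alpha l_{ii}$ are nonnegative as soon as $0<\alpha\leq 1/\max_i l_{ii}$ (note $l_{ii}\geq 0$ since it equals the negative sum of the nonpositive off-diagonal entries in row $i$). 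Hence for all sufficiently small positive $\alpha$, $\mathbf{W}$ is stochastic; this is an open interval $(0,\alpha_0]$ of admissible values. Intersecting with the co-measure-zero set from part \textbf{(i)} leaves a nonempty set of $\alpha$, and any such $\alpha$ simultaneously makes $\mathbf{W}$ stochastic and $\mathbf{W}\otimes A$ satisfy the UEPP.

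I expect the only real subtlety to be the bookkeeping in part \textbf{(i)}: one must be careful to restrict to \emph{nonzero} eigenvalues of $A$ (the UEPP only constrains nonzero eigenvalues of the product, and a zero eigenvalue of $A$ contributes only zero products), and one must correctly identify when a coincidence equation degenerates to an identity versus a proper constraint on $\alpha$ — the argument above shows it can never be an identity for an offending pair, which is the crux. Everything else (the stochasticity bounds, the finite union of measure-zero sets) is routine. If one prefers to avoid measure-theoretic language, the same conclusion follows by noting that each bad set is finite (a nonconstant affine equation has at most one root), so their union is finite and its complement in $(0,\infty)$ is nonempty.
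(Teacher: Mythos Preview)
Your proposal is correct and follows essentially the same approach as the paper's proof: both reduce part \textbf{(i)} to the observation that each coincidence $(1-\alpha\mu_i)\lambda_j=(1-\alpha\mu_{i'})\lambda_{j'}$ is an affine equation in $\alpha$ with finitely many such equations, so the bad set of $\alpha$'s is finite, and both handle part \textbf{(ii)} by noting that the WLM structure makes $I_m-\alpha L$ stochastic on an interval $(0,\alpha_0]$ which intersects the good set. Your write-up is in fact a bit more careful than the paper's in explicitly verifying that no coincidence equation degenerates to an identity (the paper simply asserts finiteness), but the argument is the same.
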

\begin{proof}
	The proof is given in Appendix \ref{appen_lemma_proof}.
\end{proof}


\begin{lemma} \label{lemma_eigen_rep}
	Suppose that matrices $\mathbf{W} \in \mathbb{R}^{m \times m}$ and $A \in \mathbb{R}^{n \times n}$ are given where $\mathbf{W}$ has all simple eigenvalues\footnote{An eigenvalue of a matrix is simple if both the geometric and algebraic multiplicities of the eigenvalue are equal to $1$.}, and $\mathbf{W} \otimes A$ satisfies the UEPP. Each eigenvector $q$ of $\mathbf{W} \otimes A$ associated with $\lambda \in~sp(\mathbf{W} \otimes~A) \setminus~\{0\}$ can be written as a Kronecker product $q=v \otimes p$, where $v$ and $p$ are eigenvectors of $\mathbf{W}$ and $A$ (associated with $\lambda_{\mathbf{W}} \in sp(\mathbf{W})$ and $\lambda_A \in sp(A)$, respectively, for which $\lambda = \lambda_{\mathbf{W}} \cdot \lambda_{A}$).
\end{lemma}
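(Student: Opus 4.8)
The plan is to exploit the fact that $\mathbf{W}$ has only simple eigenvalues to reduce $\mathbf{W}\otimes A$, by a Kronecker-structured similarity, to a block-diagonal matrix whose blocks are scalar multiples of $A$; the UEPP then forces all but one block of any eigenvector of $\mathbf{W}\otimes A$ to vanish, which yields the Kronecker factorization.

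First I would diagonalize $\mathbf{W}$. Since all its eigenvalues are simple, write $\mathbf{W} = V D V^{-1}$ with $D = diag(\mu_1,\dots,\mu_m)$, the $\mu_i \in sp(\mathbf{W})$ pairwise distinct, and $V = \begin{pmatrix} v_1 & \cdots & v_m \end{pmatrix}$ the matrix of associated right eigenvectors. Using standard Kronecker identities, $\mathbf{W}\otimes A = (V\otimes I_n)(D\otimes A)(V\otimes I_n)^{-1}$ and $D\otimes A = diag(\mu_1 A,\dots,\mu_m A)$ is block diagonal. Hence $q$ is an eigenvector of $\mathbf{W}\otimes A$ for $\lambda$ if and only if $\hat q := (V^{-1}\otimes I_n)q$ is an eigenvector of $D\otimes A$ for $\lambda$. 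Partitioning $\hat q = (\hat q_1^T,\dots,\hat q_m^T)^T$ with $\hat q_i \in \mathbb{C}^n$, the eigen-equation $(D\otimes A)\hat q = \lambda \hat q$ decouples into $\mu_i A\hat q_i = \lambda \hat q_i$ for $i \in \mathbb{V}$.

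Next I would use $\lambda\neq 0$ together with the UEPP. Fix any $i$ with $\hat q_i \neq 0$. Then $\mu_i A\hat q_i = \lambda \hat q_i \neq 0$, so $\mu_i \neq 0$ and $A\hat q_i = (\lambda/\mu_i)\hat q_i$; thus $\mu_i \in sp(\mathbf{W})$, $\lambda/\mu_i \in sp(A)$, and $\lambda = \mu_i \cdot (\lambda/\mu_i)$. By the UEPP, the factor $\mu_i$ must coincide with the unique $\lambda_{\mathbf{W}} \in sp(\mathbf{W})$ appearing in the representation $\lambda = \lambda_{\mathbf{W}}\cdot\lambda_A$. Since the $\mu_i$ are pairwise distinct, there is exactly one index — call it $i^\star$, with $\mu_{i^\star} = \lambda_{\mathbf{W}}$ — for which $\hat q_{i^\star}$ can be nonzero; that is, $\hat q_i = 0$ for all $i \neq i^\star$, and $\hat q\neq 0$ forces $\hat q_{i^\star}\neq 0$. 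Writing $p := \hat q_{i^\star}$, we obtain $\hat q = e_{i^\star}\otimes p$ with $p$ an eigenvector of $A$ for $\lambda_A = \lambda/\lambda_{\mathbf{W}}$, where $e_{i^\star}$ is the $i^\star$-th column of $I_m$. Transforming back, $q = (V\otimes I_n)(e_{i^\star}\otimes p) = (Ve_{i^\star})\otimes p = v_{i^\star}\otimes p$, and $v_{i^\star}$ is an eigenvector of $\mathbf{W}$ for $\lambda_{\mathbf{W}}$; this is the asserted factorization.

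The step I expect to require the most care is conceptual rather than computational: $A$ need not be diagonalizable, so $\mathbf{W}\otimes A$ need not be either, and one cannot hope for a simultaneous diagonalization; the argument sidesteps this by working blockwise on $diag(\mu_1 A,\dots,\mu_m A)$, where only genuine eigenvectors of each scalar-scaled copy of $A$ are invoked. One must also track that the hypothesis $\lambda\neq 0$ is used twice — once to conclude $\mu_i\neq 0$ so that $\lambda/\mu_i$ is well defined, and once to be entitled to apply the UEPP, which is asserted only for nonzero eigenvalues.
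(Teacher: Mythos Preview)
Your proof is correct and follows essentially the same idea as the paper's: both exploit the simplicity of the eigenvalues of $\mathbf{W}$ together with the UEPP to pin the eigenspace of $\mathbf{W}\otimes A$ at $\lambda$ to a single Kronecker factor. The paper's version is terser --- it asserts the geometric-multiplicity identity $g_{\mathbf{W}\otimes A}(\lambda)=g_A(\lambda_A)$ by citing an external lemma and then reads off the conclusion --- whereas you make the argument self-contained by explicitly block-diagonalizing via $(V\otimes I_n)^{-1}(\mathbf{W}\otimes A)(V\otimes I_n)=diag(\mu_1 A,\dots,\mu_m A)$ and arguing coordinatewise; your version is in effect a direct proof of that cited multiplicity lemma specialized to the case at hand.
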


\begin{proof}
	By the UEPP of $\mathbf{W} \otimes A$, there exists a unique pair of eigenvalues $\lambda_{\mathbf{W}} \in sp(\mathbf{W})$ and $\lambda_{A} \in sp(A)$ for which it holds that $\lambda = \lambda_{\mathbf{W}} \cdot \lambda_{A}$. Since $\mathbf{W}$ has all simple eigenvalues, the following equality can be shown (the proof is along the same lines as that of Lemma 3.1 in \cite{park2012_acc}):
	\begin{equation} \label{eq_lemma_eigen_rep_01}
		g_{\mathbf{W} \otimes A} (\lambda) = g_A (\lambda_A)
	\end{equation}
	where $g_{\mathbf{W} \otimes A} (\lambda)$ and $g_A (\lambda_A)$ are respectively the geometric multiplicities of $\lambda$ and $\lambda_A$.

	Notice that, since the eigenvalues of $\mathbf{W}$ are all simple, there exists a unique  eigenvector (unique up to a scale factor), say $v$, associated with $\lambda_{\mathbf{W}}$. Together with this fact, by \eqref{eq_lemma_eigen_rep_01}, it can be shown that an eigenvector $q$ of $\mathbf{W} \otimes A$ associated with $\lambda$ can be written as $q = v \otimes p$ where $p$ is an eigenvector of $A$ associated with $\lambda_{A}$. This proves the Lemma.
\end{proof}


\begin{thm} \label{thm_wlm}
	Let a strongly connected graph $\mathcal{G} = \left( \mathbb{V}, \mathbb{E} \right)$ be given.  Almost all elements of the set $\mathbb{L}(\mathcal{G})$ satisfy the following properties:
	\begin{itemize}
		\item[\textbf{(P1)}] All right and left eigenvectors have no zero entries.
		\item[\textbf{(P2)}] All eigenvalues are simple.
	\end{itemize}
\end{thm}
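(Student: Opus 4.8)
The plan is to parametrize $\mathbb{L}(\mathcal{G})$ by Euclidean coordinates and run a genericity argument: a property will hold for almost every WLM of $\mathcal{G}$ once the parameters where it fails are trapped in the zero set of a polynomial that is not the zero polynomial, and non-triviality of that polynomial is certified by exhibiting a single parameter value where it does not vanish. By \textbf{(i)} and \textbf{(iii)} of Definition~\ref{def_wlm} a WLM $L$ is determined by the vector $\theta=(l_{ji})_{(i,j)\in\mathbb{E}}$ of its off-diagonal edge entries, which sweeps out the nonempty open set $\Theta=\{\theta\in\mathbb{R}^{|\mathbb{E}|}:\theta_e<0\text{ for all }e\in\mathbb{E}\}$; the diagonal is then forced by the zero-row-sum property, so $L=L(\theta)$ is affine in $\theta$ (and the affine map extends to all of $\mathbb{R}^{|\mathbb{E}|}$), whence the coefficients of $\chi_{L(\theta)}$ and every discriminant or resultant built from them are polynomials in $\theta$. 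Because $\Theta$ is open, any such polynomial that is nonzero at even one point of $\mathbb{R}^{|\mathbb{E}|}$ vanishes only on a Lebesgue-null subset of $\Theta$. Before addressing (P1)--(P2) I would dispose of the eigenvalue $0$, which is innocuous for \emph{every} $L\in\mathbb{L}(\mathcal{G})$: one has $L\mathbf{1}=\mathbf{0}$, and for large $\alpha$ the matrix $\alpha I_m-L$ is nonnegative and, since $\mathcal{G}$ is strongly connected, irreducible, so Perron--Frobenius makes $0$ an algebraically simple eigenvalue of $L$ with right eigenvector $\mathbf{1}$ and a positive left eigenvector, both of which have no zero entries. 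So it suffices to treat the nonzero eigenvalues.

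For (P2) I would factor $\chi_{L(\theta)}(\lambda)=\lambda\,\psi_\theta(\lambda)$ with $\deg\psi_\theta=m-1$; the set of $\theta$ for which $L(\theta)$ has some repeated eigenvalue lies inside $\{\operatorname{disc}(\psi_\theta)\cdot\psi_\theta(0)=0\}$ (the first factor excludes a repeated nonzero eigenvalue, the second a double eigenvalue at $0$). To see this is a nonzero polynomial, use that a strongly connected $\mathcal{G}$ contains a spanning rooted tree (Definition~\ref{def_rooted_tree}) with root $r$: order $\mathbb{V}$ so that each vertex comes after its parent, put pairwise distinct positive values $d_2,\dots,d_m$ on the tree edges, and put a small value $\varepsilon$ on every other edge. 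At $\varepsilon=0$ the matrix $L$ is lower triangular with diagonal $(0,d_2,\dots,d_m)$, so $\psi_\theta$ has the distinct nonzero roots $d_2,\dots,d_m$ and $\psi_\theta(0)=\prod_{j\ge 2}d_j\neq 0$; by continuity the product stays nonzero for small $\varepsilon>0$, i.e.\ at an honest WLM. Hence (P2) holds for almost all $\theta\in\Theta$.

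For (P1) I would first reduce eigenvector-entry non-vanishing to a spectral disjointness statement. Fix a vertex $v$, a nonzero eigenvalue $\lambda$ of $L$, and right and left eigenvectors $q$ and $p$. Splitting off the $v$-th coordinate in $Lq=\lambda q$ shows immediately that $q_v=0$ forces the remaining block $q_{\hat v}$ to be a nonzero eigenvector, for the eigenvalue $\lambda$, of the submatrix $L_{\hat v}$ obtained by deleting row and column $v$; the analogous splitting of $p^{T}L=\lambda p^{T}$ shows $p_v=0$ likewise forces $\lambda\in sp(L_{\hat v})$. Thus (P1) holds for $L$ whenever $sp(L)\cap sp(L_{\hat v})=\varnothing$ for every $v\in\mathbb{V}$ — and the eigenvalue $0$ never causes a collision, since $\det L_{\hat v}>0$ for any WLM of a strongly connected graph by the (all-minors) matrix-tree theorem, the required in-arborescences existing because $\mathcal{G}$ is strongly connected. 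This disjointness condition is exactly the non-vanishing of the polynomial $f(\theta)=\prod_{v\in\mathbb{V}}\operatorname{Res}_\lambda\!\big(\chi_{L(\theta)}(\lambda),\,\chi_{L(\theta)_{\hat v}}(\lambda)\big)$, so (P1) — and, with the previous paragraph, the whole theorem — follows once I produce one parameter value at which $f$ does not vanish.

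The hard part will be producing that witness. The triangular spanning-tree matrix from the (P2) step is useless here: deleting a matched row and column of a triangular matrix gives a triangular matrix whose spectrum is a sub-multiset of $sp(L)$, so $sp(L)\cap sp(L_{\hat v})\neq\varnothing$; a genuine cycle is required. When $\mathcal{G}$ has a Hamiltonian cycle I would put distinct positive weights $a_1,\dots,a_m$ on it and $\varepsilon$ on the rest: for the purely cyclic $L$ one computes $\chi_L(\lambda)=\prod_{i=1}^{m}(\lambda-a_i)+(-1)^{m-1}\prod_{i=1}^{m}a_i$, so $\chi_L(a_i)=(-1)^{m-1}\prod_{j}a_j\neq 0$ for each $i$, while $sp(L_{\hat v})\subseteq\{a_1,\dots,a_m\}$ for every $v$; hence $sp(L)\cap sp(L_{\hat v})=\varnothing$, and continuity absorbs the $\varepsilon$-perturbation. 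For a general strongly connected $\mathcal{G}$ I would reduce to this case by induction on $|\mathbb{E}|$: delete an edge whose removal keeps $\mathcal{G}$ strongly connected and perturb the corresponding entry back in from $0$ (spectral disjointness being an open condition), the base case being a minimally strongly connected graph, handled through an ear decomposition (a cycle with attached internally disjoint directed paths) by choosing the ear weights so that each added ear contributes new eigenvalues disjoint from everything present so far. An alternative I would consider is a dimension count on the eigenpair incidence variety $\{(\theta,\lambda,[q]):(L(\theta)-\lambda I_m)q=0\}$, whose fiber over a generic $\theta$ consists of exactly $m$ points; the locus where some $q_v$ vanishes then projects to a proper subvariety provided no component of the incidence variety sits inside $\{q_v=0\}$, which is clear from the ever-present point $(\theta,0,[\mathbf{1}])$ once one argues the incidence variety is irreducible. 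Either way, the crux is to certify that imposing ``$q_v=0$'' genuinely lowers dimension, and this is where I expect essentially all of the difficulty to reside.
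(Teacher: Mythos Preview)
Your overall genericity framework---parametrize $\mathbb{L}(\mathcal{G})$ by the off-diagonal edge entries, trap the bad set in the zero locus of a polynomial, and exhibit a single point where that polynomial does not vanish---matches the paper's strategy exactly, and your treatment of the eigenvalue $0$ via Perron--Frobenius and of \textbf{(P2)} via the discriminant is correct and close in spirit to the paper's (the paper produces its \textbf{(P2)}-witness by an inductive row-rescaling argument rather than a spanning-tree perturbation, but the difference is cosmetic).

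Where you diverge from the paper is in \textbf{(P1)}, and this is also where your argument is both more elaborate than necessary and not yet complete. You reduce ``no right eigenvector has a zero $v$-th entry'' to the strictly stronger condition $sp(L)\cap sp(L_{\hat v})=\varnothing$, and are then forced to manufacture a witness for this stronger condition---hence the Hamiltonian-cycle computation, the unfinished ear-decomposition induction, and the incidence-variety fallback whose irreducibility you do not establish. The paper sidesteps all of this by noticing that, by the PBH test, ``every right eigenvector of $L$ has nonzero $v$-th entry'' is \emph{exactly} observability of the pair $(L,e_v^{T})$, and dually ``every left eigenvector has nonzero $v$-th entry'' is controllability of $(L,e_v)$. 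Unobservability is already a polynomial condition in the entries of $L$ (vanishing of the determinant of the observability matrix), so the same one-witness-suffices principle applies---but now the required witness is any $L\in\mathbb{L}(\mathcal{G})$ with $(L,e_v^{T})$ observable, a far weaker demand than spectral disjointness of $L$ and all its principal submatrices. The paper obtains this witness not by explicit construction but by structural systems theory: on the graph $\overline{\mathcal{G}}$ obtained from $\mathcal{G}$ by adding a loop at every vertex, the structured pair $([A],[h_v]^{T})$ is structurally observable because $\overline{\mathcal{G}}$ is strongly connected with loops at every vertex; a generic nonnegative realization $(A_1,h_v^{T})$ is therefore observable, and a Perron--Frobenius diagonal similarity followed by the affine map $L_1=I-\tilde\lambda^{-1}M^{-1}A_1M$ converts $A_1$ into a bona fide WLM without destroying observability.

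In short, your \textbf{(P1)} plan is not wrong in principle, but by passing through $sp(L)\cap sp(L_{\hat v})=\varnothing$ you create the very obstacle you then struggle to overcome; recasting \textbf{(P1)} as observability/controllability of $(L,e_v^{T})$ and $(L,e_v)$ collapses your Hamiltonian/ear-decomposition program into a one-line appeal to structural controllability for strongly connected graphs with self-loops, and that is the route the paper takes.
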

\begin{proof}
	Since the proof is lengthy and needs certain preliminary results on structured linear system theory, we provide a review of key properties of structured linear systems along with a  proof of Theorem \ref{thm_wlm} in Appendix \ref{appen_weighted_laplacian}.
\end{proof}

\subsection{Brief Introduction to Stabilization via Decentralized Control}
We start by reviewing certain classical results in decentralized control that will be used in the proof of Theorem~\ref{thm_main}. Of special relevance are the results in \cite{wang1973_tac, davison1990_tac} that show that the existence of a stabilizing decentralized controller for a LTI plant can be characterized using the notion of fixed modes \cite{anderson1981_automatica, davison1983_automatica}, which is analogous to the concept of uncontrollable or unobservable modes in classical centralized control problems.

In order to give various definitions and concepts needed throughout this section, we will analyze the effect of decentralized feedback on the following plant, which will also be used to introduce certain key concepts used throughout the paper:
\begin{equation} \label{eq_decen}
	\begin{split}
		\widetilde{x}(k+1) &= \widetilde{A} \widetilde{x}(k) + \sum_{i \in \mathbb{V}} \widetilde{B}_i \widetilde{u}_i(k) \\
		\widetilde{y}_i(k) &= \widetilde{H}_i \widetilde{x}(k), ~ i \in \mathbb{V}
	\end{split}
\end{equation}
where $\widetilde{x}(k) \in \mathbb{R}^{\widetilde{n}}$, $\widetilde{y}_i(k) \in \mathbb{R}^{\widetilde{r}_i}$, and $\widetilde{u}_i(k) \in \mathbb{R}^{\widetilde{p}_i}$ are the state, $i$-th output, and $i$-th control.

\begin{mydef}
	A given $\lambda \in \mathbb{C}$ is a fixed mode of \eqref{eq_decen} if it is an eigenvalue of $\widetilde{A}~+~\sum_{i \in \mathbb{V}}\widetilde{B}_i\widetilde{K}_i\widetilde{H}_i$ for all $\widetilde{K}_i \in \mathbb{R}^{\widetilde{p}_i \times \widetilde{r}_i}$.
\end{mydef}
\begin{remark}
	The fixed mode is an eigenvalue of the plant \eqref{eq_decen} which is invariant under output feedback $\widetilde{u}_i(k) =~\widetilde{K}_i \widetilde{y}_i(k),~ i \in \mathbb{V}$, where $\widetilde{K}_i \in \mathbb{R}^{\widetilde{p}_i \times \widetilde{r}_i}$.
\end{remark}
	
The fixed modes can be characterized by an algebraic rank test as described in the following Proposition.
\begin{prop} \cite{davison1983_automatica} \label{prop_fixed_modes}
	Consider that a LTI plant is given as in \eqref{eq_decen}. Let $\widetilde{B} = \begin{pmatrix} \widetilde{B}_1 & \cdots & \widetilde{B}_m \end{pmatrix}$ and $\widetilde{H} =~\begin{pmatrix} \widetilde{H}_1^T & \cdots & \widetilde{H}_m^T \end{pmatrix}^T$. A given $\lambda \in \mathbb{C}$ is a fixed mode of the plant if and only if there exists $\mathbb{J} \subseteq \mathbb{V}$ such that 
	\begin{equation}
		rank \begin{pmatrix} \widetilde{A} - \lambda I_{\widetilde{n}} & \widetilde{B}_{\mathbb{J}} \\ \widetilde{H}_{\mathbb{J}^{c}} & \mathbf{0} \end{pmatrix} <  \widetilde{n},
	\end{equation}
	where $\widetilde{n}$ is the dimension of the matrix $\widetilde{A}$ and $\mathbb{J}^{c} = \mathbb{V} \setminus \mathbb{J}$.
\end{prop}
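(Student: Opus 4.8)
\emph{Plan.} This is the classical algebraic rank test for decentralized fixed modes (\cite{davison1983_automatica}); I would prove the two implications separately, the ``if'' part by a short unimodular-reduction argument and the ``only if'' part by induction on $m=|\mathbb{V}|$, with the inductive step resting on a matrix-rank identity. \textbf{Sufficiency.} Suppose $rank\begin{pmatrix}\widetilde{A}-\lambda I_{\widetilde{n}} & \widetilde{B}_{\mathbb{J}}\\ \widetilde{H}_{\mathbb{J}^c} & \mathbf{0}\end{pmatrix}<\widetilde{n}$ for some $\mathbb{J}\subseteq\mathbb{V}$. Fix any block-diagonal feedback $\{\widetilde{K}_i\}_{i\in\mathbb{V}}$ and put $\widetilde{A}_1=\widetilde{A}+\sum_{i\in\mathbb{J}^c}\widetilde{B}_i\widetilde{K}_i\widetilde{H}_i$. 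Because $\sum_{i\in\mathbb{J}^c}\widetilde{B}_i\widetilde{K}_i\widetilde{H}_i$ factors through $\widetilde{H}_{\mathbb{J}^c}$, replacing $\widetilde{A}$ by $\widetilde{A}_1$ in the block matrix amounts to adding a left multiple of its lower block row to its upper block row, a unimodular operation; hence the rank stays $<\widetilde{n}$, and in particular $rank\begin{pmatrix}\widetilde{A}_1-\lambda I_{\widetilde{n}} & \widetilde{B}_{\mathbb{J}}\end{pmatrix}<\widetilde{n}$. Choose a nonzero row vector $w^T$ annihilating this block, so $w^T(\widetilde{A}_1-\lambda I_{\widetilde{n}})=0$ and $w^T\widetilde{B}_i=0$ for $i\in\mathbb{J}$. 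Then $w^T\bigl(\widetilde{A}+\sum_{i\in\mathbb{V}}\widetilde{B}_i\widetilde{K}_i\widetilde{H}_i-\lambda I_{\widetilde{n}}\bigr)=w^T(\widetilde{A}_1-\lambda I_{\widetilde{n}})+\sum_{i\in\mathbb{J}}(w^T\widetilde{B}_i)\widetilde{K}_i\widetilde{H}_i=0$, so $\lambda$ is an eigenvalue for every decentralized feedback, i.e., a fixed mode.

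\textbf{Necessity, induction setup.} For $m=1$ the statement is the PBH test: $\lambda$ is invariant under every $\widetilde{u}_1=\widetilde{K}_1\widetilde{y}_1$ if and only if $(\widetilde{A},\widetilde{B}_1)$ is not $\lambda$-controllable or $(\widetilde{H}_1,\widetilde{A})$ is not $\lambda$-observable, which is the rank condition with $\mathbb{J}=\{1\}$ or $\mathbb{J}=\emptyset$. For the inductive step, let $\lambda$ be a fixed mode of \eqref{eq_decen}. Freezing $\widetilde{K}_m$ and varying the remaining gains shows that, for every $\widetilde{K}_m$, $\lambda$ is a fixed mode of the $(m-1)$-channel plant $\bigl(\widetilde{A}+\widetilde{B}_m\widetilde{K}_m\widetilde{H}_m,\{\widetilde{B}_i\}_{i<m},\{\widetilde{H}_i\}_{i<m}\bigr)$, so by the induction hypothesis each $\widetilde{K}_m$ admits some $\mathbb{J}(\widetilde{K}_m)\subseteq\{1,\dots,m-1\}$ making all $\widetilde{n}\times\widetilde{n}$ minors of the corresponding block matrix vanish. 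There are only finitely many such subsets, and for each fixed one the set of admissible $\widetilde{K}_m$ is the common zero set of those minors, hence an algebraic subset of $\mathbb{R}^{\widetilde{p}_m\times\widetilde{r}_m}$; since a finite union of proper algebraic subsets cannot cover $\mathbb{R}^{\widetilde{p}_m\times\widetilde{r}_m}$, one subset $\mathbb{J}\subseteq\{1,\dots,m-1\}$, with complement $\mathbb{K}:=\{1,\dots,m-1\}\setminus\mathbb{J}$, satisfies $rank\begin{pmatrix}\widetilde{A}+\widetilde{B}_m\widetilde{K}_m\widetilde{H}_m-\lambda I_{\widetilde{n}} & \widetilde{B}_{\mathbb{J}}\\ \widetilde{H}_{\mathbb{K}} & \mathbf{0}\end{pmatrix}<\widetilde{n}$ for \emph{all} $\widetilde{K}_m$.

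\textbf{Necessity, closing the step.} Write this matrix as $X_0+\widehat{B}\widetilde{K}_m\widehat{H}$ with $X_0=\begin{pmatrix}\widetilde{A}-\lambda I_{\widetilde{n}} & \widetilde{B}_{\mathbb{J}}\\ \widetilde{H}_{\mathbb{K}} & \mathbf{0}\end{pmatrix}$, $\widehat{B}=\begin{pmatrix}\widetilde{B}_m\\ \mathbf{0}\end{pmatrix}$, $\widehat{H}=\begin{pmatrix}\widetilde{H}_m & \mathbf{0}\end{pmatrix}$, and invoke the classical identity $\max_{\widetilde{K}_m}rank\!\left(X_0+\widehat{B}\widetilde{K}_m\widehat{H}\right)=\min\bigl\{rank\begin{pmatrix}X_0 & \widehat{B}\end{pmatrix},\,rank\begin{pmatrix}X_0\\ \widehat{H}\end{pmatrix}\bigr\}$. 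The hypothesis forces this minimum to be $<\widetilde{n}$. If the first term is $<\widetilde{n}$ then $rank\begin{pmatrix}\widetilde{A}-\lambda I_{\widetilde{n}} & \widetilde{B}_{\mathbb{J}\cup\{m\}}\\ \widetilde{H}_{\mathbb{K}} & \mathbf{0}\end{pmatrix}<\widetilde{n}$ and $\mathbb{J}\cup\{m\}\subseteq\mathbb{V}$ is the sought subset; if the second is $<\widetilde{n}$ then, after a row permutation, $rank\begin{pmatrix}\widetilde{A}-\lambda I_{\widetilde{n}} & \widetilde{B}_{\mathbb{J}}\\ \widetilde{H}_{\mathbb{K}\cup\{m\}} & \mathbf{0}\end{pmatrix}<\widetilde{n}$ and $\mathbb{J}\subseteq\mathbb{V}$ works, its $\mathbb{V}$-complement being $\mathbb{K}\cup\{m\}$. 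Either way the induction closes.

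\textbf{Main obstacle.} The only nontrivial ingredients are in the necessity direction: the passage from ``some partition works for each $\widetilde{K}_m$'' to ``one partition works for every $\widetilde{K}_m$,'' which needs the observation that the admissible-gain sets are algebraic and that finitely many proper varieties cannot fill $\mathbb{R}^{\widetilde{p}_m\times\widetilde{r}_m}$; and the rank identity $\max_{\widetilde{K}}rank(X+B\widetilde{K}H)=\min\bigl\{rank\begin{pmatrix}X & B\end{pmatrix},rank\begin{pmatrix}X\\ H\end{pmatrix}\bigr\}$, whose ``$\geq$'' half is proved by normalizing $X$ via row and column operations adapted to the column space of $B$ and the row space of $H$ and then choosing $\widetilde{K}$ to insert a maximal identity block.
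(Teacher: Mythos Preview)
The paper does not supply a proof of this proposition at all: it is quoted verbatim from \cite{davison1983_automatica} (see also \cite{anderson1981_automatica}) and used as a black box in the proof of Theorem~\ref{thm_main}. There is therefore nothing in the manuscript to compare your argument against.

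That said, your plan is a faithful reconstruction of the classical Anderson--Clements/Davison--Chang proof and is sound at the level of a sketch. The sufficiency direction via the unimodular row operation and the left-eigenvector argument is clean and complete. For necessity, your two nontrivial ingredients are both standard: (a) the pigeonhole over finitely many partitions, using that each $V_{\mathbb{J}}=\{\widetilde{K}_m:\text{rank}(\cdot)<\widetilde{n}\}$ is Zariski-closed and that $\mathbb{R}^{\widetilde{p}_m\times\widetilde{r}_m}$ is irreducible, so if $\bigcup_{\mathbb{J}}V_{\mathbb{J}}$ is the whole space then some $V_{\mathbb{J}}$ already is; and (b) the rank identity $\max_{K}\operatorname{rank}(X+BK H)=\min\bigl\{\operatorname{rank}\begin{pmatrix}X & B\end{pmatrix},\operatorname{rank}\begin{pmatrix}X\\ H\end{pmatrix}\bigr\}$, whose ``$\le$'' half is immediate from column/row-space containment and whose ``$\ge$'' half follows exactly as you indicate by putting $X$ in a normal form adapted to $\operatorname{im}B$ and $\operatorname{ker}H$ and filling in an identity block with $K$. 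One small point worth making explicit in a full write-up: in the base case $m=1$, the equivalence ``$\lambda$ fixed under all static output feedback $\Leftrightarrow$ $\lambda$ uncontrollable or unobservable'' is itself the $m=1$ instance of the identity in (b), so you may as well absorb the base case into the inductive argument rather than appealing to it separately.
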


\begin{thm} \cite{wang1973_tac} \label{thm_decentralized_controller}
	Given a LTI plant as in \eqref{eq_decen}, consider output feedback of the following form:
\begin{equation} \label{eq_decen_controller}
	\begin{split}
		z_i(k+1) &= \widetilde{S}_i z_i(k) + \widetilde{Q}_i \widetilde{y}_i(k), ~ i \in \mathbb{V} \\
		\widetilde{u}_i(k) &= \widetilde{P}_i z_i(k) + \widetilde{K}_i \widetilde{y}_i(k)
	\end{split}
\end{equation}
where $z_i(k) \in \mathbb{R}^{\mu_i}$ for some $\mu_i \in \mathbb{N} \cup \{0\}$. If every unstable fixed mode is located inside the unit circle in $\mathbb{C}$ then there exists a parameter choice $\left\{ \widetilde{K}_i, \widetilde{P}_i, \widetilde{Q}_i, \widetilde{S}_i \right\}_{i \in \mathbb{V}}$ for which the resultant closed-loop system is stable.
\end{thm}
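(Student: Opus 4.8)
The plan is to prove this by induction on the number $m$ of control channels, using the classical loop-by-loop closure technique: close one channel at a time with a generically chosen static gain, and argue that the system that remains—controlled by the channels not yet closed—still has no unstable fixed modes, so that the induction hypothesis applies.

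For the base case $m = 1$, I would first note that the fixed modes of a single-channel plant $(\widetilde{A}, \widetilde{B}_1, \widetilde{H}_1)$ coincide with the union of the uncontrollable modes of $(\widetilde{A}, \widetilde{B}_1)$ and the unobservable modes of $(\widetilde{A}, \widetilde{H}_1)$; this is exactly the rank test of Proposition~\ref{prop_fixed_modes} specialized to $\mathbb{J} \in \{\emptyset, \{1\}\}$. Hence the hypothesis that no fixed mode lies on or outside the unit circle is equivalent to $(\widetilde{A}, \widetilde{B}_1)$ being stabilizable and $(\widetilde{A}, \widetilde{H}_1)$ being detectable, and a standard observer-based output feedback—i.e.\ a dynamic controller of the form \eqref{eq_decen_controller} with $\mu_1 = \widetilde{n}$, combining a full-order observer with a stabilizing state-feedback gain—places all closed-loop eigenvalues inside the open unit disk by the separation principle.

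For the inductive step, assume the statement holds for every plant with $m-1$ channels. Given an $m$-channel plant with no unstable fixed modes, I would close channel $m$ with a static output feedback $\widetilde{u}_m = \widetilde{K}_m \widetilde{y}_m$ and regard the resulting system—with state matrix $\widetilde{A} + \widetilde{B}_m \widetilde{K}_m \widetilde{H}_m$ and remaining control channels $\{1,\dots,m-1\}$—as an $(m-1)$-channel plant in the form \eqref{eq_decen}. The key lemma is that for almost every $\widetilde{K}_m$ this reduced plant has the \emph{same} set of fixed modes as the original, hence in particular no unstable fixed modes. One inclusion holds for every $\widetilde{K}_m$: any decentralized feedback on channels $1,\dots,m-1$ together with the fixed $\widetilde{K}_m$ is a special case of decentralized feedback on all $m$ channels, so every fixed mode of the $m$-channel plant is a fixed mode of the reduced plant. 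The reverse inclusion is where genericity is needed. Once the lemma is in place, pick $\widetilde{K}_m$ outside the exceptional null set, apply the induction hypothesis to obtain a decentralized dynamic controller on channels $1,\dots,m-1$ that stabilizes the reduced plant, and append $\widetilde{K}_m$ as an order-$0$ controller on channel $m$ (allowed since $\mu_m \in \mathbb{N} \cup \{0\}$); this is a decentralized dynamic controller of the form \eqref{eq_decen_controller} that stabilizes the original plant.

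The main obstacle is precisely this genericity lemma: showing that closing a single channel with a static gain does not, for almost all gains, create new fixed modes for the remaining channels. The subtlety is that a priori there could be infinitely many candidate ``new'' fixed modes, so one cannot simply invoke ``a finite union of proper algebraic varieties is proper.'' The resolution is to observe that, for a candidate $\lambda$, the condition ``$\lambda$ is a fixed mode of the reduced plant but not of the full plant'' is expressed by polynomial (in fact rank-deficiency) conditions in $\lambda$ and the entries of all gain matrices—via the rank test of Proposition~\ref{prop_fixed_modes} or directly via the closed-loop characteristic polynomial, whose degree in $\lambda$ is $\widetilde{n}$—so that the relevant modes are roots of finitely many nontrivial polynomial relations and the set of exceptional $\widetilde{K}_m$ is cut out by finitely many nonzero polynomial conditions, hence has Lebesgue measure zero. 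Everything else—the base case, and the bookkeeping of splicing the static gain together with the inductively constructed dynamic controllers—is routine.
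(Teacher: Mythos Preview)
The paper does not prove this theorem at all: it is quoted verbatim from \cite{wang1973_tac} and used as a black box in the proof of Theorem~\ref{thm_main}. So there is no ``paper's own proof'' to compare against; you have supplied an argument where the authors simply cite the literature.

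That said, your sketch is essentially the classical Wang--Davison argument, and the overall architecture (induction on $m$, close one channel with a generic static gain, invoke the induction hypothesis on the remaining $m-1$ channels) is correct. You have also correctly identified the only nontrivial step: showing that for almost every $\widetilde{K}_m$ the reduced $(m-1)$-channel plant has exactly the same set of fixed modes as the original. Your justification of this step is the right idea but slightly underspecified. The cleanest way to make it rigorous is to note that the fixed modes of the original plant are precisely the roots of $g(\lambda) := \gcd_{K}\det(\widetilde{A}+\sum_i \widetilde{B}_i K_i \widetilde{H}_i - \lambda I)$, where the $\gcd$ is taken over all block-diagonal $K$, and similarly the fixed modes of the reduced plant are the roots of $g_{\widetilde{K}_m}(\lambda) := \gcd_{K_1,\dots,K_{m-1}}\det(\cdots)$. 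One always has $g \mid g_{\widetilde{K}_m}$; writing $\det(\cdots) = g(\lambda)\, q(\lambda,K)$ with $\gcd_K q = 1$, the question becomes whether $\gcd_{K_1,\dots,K_{m-1}} q(\lambda,K_1,\dots,K_{m-1},\widetilde{K}_m) = 1$ for generic $\widetilde{K}_m$, which follows because the contrary would force a nontrivial common factor for \emph{all} $\widetilde{K}_m$, contradicting $\gcd_K q = 1$. This sidesteps the ``infinitely many candidate $\lambda$'' worry you raised, since one works with the $\gcd$ polynomial rather than with individual eigenvalues. With that refinement your proof goes through.
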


\begin{remark}
By applying \eqref{eq_decen_controller} into \eqref{eq_decen}, we can write the overall state-space equation in the following compact form:
\begin{equation} \label{eq_decen_form}
	\begin{split}
		\begin{pmatrix} \widetilde{x}(k+1) \\ z(k+1) \end{pmatrix} = \begin{pmatrix} \widetilde{A} + \widetilde{B} \widetilde{K} \widetilde{H} & \widetilde{B} \widetilde{P} \\ \widetilde{Q} \widetilde{H} & \widetilde{S} \end{pmatrix} \begin{pmatrix} \widetilde{x}(k) \\ z(k) \end{pmatrix}
	\end{split}
\end{equation}
where 
\begin{equation*}
	\begin{split}
		&z = \left( z_1^T, \cdots, z_m^T \right)^T, \quad \widetilde{B} = \begin{pmatrix} \widetilde{B}_1 & \cdots & \widetilde{B}_m \end{pmatrix}, \quad \widetilde{H} = \begin{pmatrix} \widetilde{H}_1^T & \cdots & \widetilde{H}_m^T \end{pmatrix}^T \\
		&\widetilde{K} = diag\left( \widetilde{K}_1, \cdots, \widetilde{K}_m \right), \quad  \widetilde{P} = diag\left( \widetilde{P}_1, \cdots, \widetilde{P}_m \right) \\
		&\widetilde{Q} = diag\left( \widetilde{Q}_1, \cdots, \widetilde{Q}_m \right), \quad \widetilde{S} = diag\left( \widetilde{S}_1, \cdots, \widetilde{S}_m \right)
	\end{split}
\end{equation*}
\end{remark}

\subsection{Proof of Theorem \ref{thm_main}} \label{subsection_proof_thm_main}
To analyze the stability of the proposed estimation scheme, we group the estimation error and the augmented states of all observers to obtain an overall state-space representation as in \eqref{eq_decen_form}. This is useful because it can be used to show that finding omniscience-achieving parameters can be equivalently stated as finding a stabilizing decentralized controller for an associated LTI system. This idea, in conjunction with Theorem \ref{thm_decentralized_controller}, allows us to connect the absence of unstable fixed modes for an appropriate decentralized control system with the existence of an omniscience-achieving scheme.


We proceed by writing the error dynamics of \eqref{eq_distributed_observer} as follows\footnote{Notice that the error dynamics is completely decoupled from the state estimates under the condition $\sum_{j \in \mathbb{N}_i} \mathbf{w}_{ij}=1$ for all $i \in \mathbb{V}$.}:
\begin{equation} \label{eq_error_dynamics02}
	\begin{split}
		\widetilde{x}_i (k+1) &= A \sum_{j \in {\mathbb{N}}_i} \mathbf{w}_{ij} \widetilde{x}_j(k) - \mathbf{K}_{i} H_i \widetilde{x}_i(k) - \mathbf{P}_{i} z_i (k), ~ i \in \mathbb{V} \\
		z_i(k+1) &= \mathbf{Q}_{i} H_i \widetilde{x}_i (k) + \mathbf{S}_{i} z_i (k),
	\end{split}
\end{equation}
where $\widetilde{x}_i(k) \overset{def}{=} x(k) - \hat{x}_i(k)$. Furthermore, we can rewrite \eqref{eq_error_dynamics02} as follows:

\begin{equation} \label{eq_error_dynamics_col}
	\begin{split}
		\begin{pmatrix} \widetilde{x}(k+1) \\ z(k+1)\end{pmatrix} = \begin{pmatrix} \mathbf{W} \otimes A - \overline{B} ~ \overline{\mathbf{K}} ~ \overline{H} & - \overline{B} ~ \overline{\mathbf{P}} \\ \overline{\mathbf{Q}} ~ \overline{H} & \overline{\mathbf{S}} \end{pmatrix} \begin{pmatrix} \widetilde{x}(k) \\ z(k)\end{pmatrix}
	\end{split}
\end{equation}
with
\begin{equation} \label{eq_variable_def}
	\begin{split}
		&\widetilde{x} = \begin{pmatrix} \widetilde{x}_1^T & \cdots & \widetilde{x}_m^T \end{pmatrix}^T, \quad z = \begin{pmatrix} z_1^T & \cdots & z_m^T \end{pmatrix}^T \\
		&\overline{B} = \begin{pmatrix} \overline{B}_1 & \cdots & \overline{B}_m \end{pmatrix} \text{ with } \overline{B}_i = e_i \otimes I_n \\
		&\overline{H} = \begin{pmatrix} \overline{H}_1^T & \cdots & \overline{H}_m^T \end{pmatrix}^T \text{ with } \overline{H}_i = e_i^T \otimes H_i \\
		&\overline{\mathbf{K}} = diag \left(\mathbf{K}_1, \cdots, \mathbf{K}_m \right),
		\quad \overline{\mathbf{P}} = diag \left(\mathbf{P}_1, \cdots, \mathbf{P}_m \right) \\
		&\overline{\mathbf{Q}} = diag \left(\mathbf{Q}_1, \cdots, \mathbf{Q}_m \right),
		\quad \overline{\mathbf{S}} = diag \left(\mathbf{S}_1, \cdots, \mathbf{S}_m \right), \\
	\end{split}
\end{equation}
where $e_i$ is the $i$-th column of the $m$-dimensional identity matrix.

Notice that \eqref{eq_error_dynamics_col} can be viewed as the state-space representation of a closed-loop system obtained by applying decentralized output feedback, parametrized by $\left\{ \mathbf{K}_i, \mathbf{P}_i, \mathbf{Q}_i, \mathbf{S}_i \right\}_{i \in \mathbb{V}}$, to a LTI system, described by $(\mathbf{W} \otimes A, \overline{B}, \overline{H})$. Hence, we can write  \eqref{eq_error_dynamics_col} as in \eqref{eq_decen_form} by selecting $\widetilde{A} = \mathbf{W} \otimes A$, $\widetilde{B} = -\overline{B}$, $\widetilde{H} = \overline{H}$, $\widetilde{K} = \overline{\mathbf{K}}$, $\widetilde{P} = \overline{\mathbf{P}}$, $\widetilde{Q} = \overline{\mathbf{Q}}$, and $\widetilde{S} = \overline{\mathbf{S}}$. 

Based on the aforementioned relation, under the assumption that there are no unstable fixed modes in $(\mathbf{W}~\otimes~A, \overline{B}, \overline{H})$, we can apply Theorem \ref{thm_decentralized_controller} to conclude that we are ready to apply the design procedures proposed in \cite{wang1973_tac, davison1990_tac} to compute the gain matrices $\left\{ \mathbf{K}_i, \mathbf{P}_i, \mathbf{Q}_i, \mathbf{S}_i \right\}_{i \in \mathbb{V}}$ that, in conjunction with the given $\mathbf{W}$, are omniscience-achieving.

The following Lemma is used in the proof of Theorem \ref{thm_main}.
\begin{lemma} \label{lemma_weight_matrix}
	Let a matrix $A \in \mathbb{R}^{n \times n}$ and a directed graph $\mathcal{G} = \left( \mathbb{V}, \mathbb{E} \right)$ with two source components $\mathcal{G}_1 =~\left( \mathbb{V}_1, \mathbb{E}_1 \right)$ and $\mathcal{G}_2 = \left( \mathbb{V}_2, \mathbb{E}_2 \right)$ be given. There exists a matrix $\mathbf{W} \in \mathbb{R}^{m \times m}$ for which the following hold:
	\begin{itemize}
		\item[\textbf{(F0)}] $\mathbf{W}$ satisfies $\mathbf{W} \cdot \mathbf{1} = \mathbf{1}$ and has the following structure:
		\begin{equation} \label{eq_lemma_weight_matrix_01}
			\mathbf{W} = \begin{pmatrix} \mathbf{W}_{1} & \mathbf{0} & \mathbf{0} \\ \mathbf{0} & \mathbf{W}_{2} & \mathbf{0}  \\ \mathbf{W}_{31} & \mathbf{W}_{32} & \mathbf{W}_{33} \end{pmatrix}
		\end{equation}
where the sparsity patterns of $\mathbf{W}_1 \in \mathbb{R}^{|\mathbb{V}_1| \times |\mathbb{V}_1|}$, $\mathbf{W}_2 \in \mathbb{R}^{|\mathbb{V}_2| \times |\mathbb{V}_2|}$, and $\mathbf{W} \in \mathbb{R}^{|\mathbb{V}| \times |\mathbb{V}|}$ are consistent with $\mathcal{G}_1$, $\mathcal{G}_2$, and $\mathcal{G}$, respectively\footnote{The sparsity pattern of a matrix $\mathbf{W} = \left( \mathbf{w}_{ij} \right)_{i,j \in \mathbb{V}}$ is consistent with a graph $\mathcal{G} = \left( \mathbb{V}, \mathbb{E} \right)$ if $\mathbf{w}_{ij}=0$ for $(j,i) \notin \mathbb{E}$.}.

	
		\item[\textbf{(F1)}] All the right and left eigenvectors of $\mathbf{W}_{1}$ and $\mathbf{W}_{2}$ have no zero entries.
		\item[\textbf{(F2)}] All the eigenvalues of $\mathbf{W}_{1}$ and $\mathbf{W}_{2}$ are simple.
		\item[\textbf{(F3)}] For $i \in \{1, 2\}$, each eigenvector $q$ of $\mathbf{W}_i \otimes A$ associated with $\lambda \in \Lambda_U \left( \mathbf{W}_i \otimes A\right)$ can be written as a Kronecker product $q=v \otimes p$, where $v$ and $p$ are eigenvectors of $\mathbf{W}$ and $A$ (associated with $\lambda_{\mathbf{W}} \in sp\left( \mathbf{W} \right)$ and $\lambda_{A} \in \Lambda_U\left( A \right)$, respectively, for which $\lambda = \lambda_{\mathbf{W}} \cdot \lambda_A$).

		\item[\textbf{(F4)}] It holds that $\Lambda_U \left( \mathbf{W}_{33} \otimes A \right) = \emptyset$.
	\end{itemize}
\end{lemma}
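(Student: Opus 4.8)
The plan is to assemble $\mathbf{W}$ block-row by block-row, treating the two source components first. For the diagonal blocks $\mathbf{W}_1,\mathbf{W}_2$: since $\mathcal{G}_1,\mathcal{G}_2$ are strongly connected, I would invoke Theorem~\ref{thm_wlm} to pick WLMs $L_1\in\mathbb{L}(\mathcal{G}_1)$, $L_2\in\mathbb{L}(\mathcal{G}_2)$ enjoying (P1) and (P2), then apply Lemma~\ref{lemma_unique_rep}(ii) to choose $\alpha_1,\alpha_2>0$ so that $\mathbf{W}_j:=I-\alpha_jL_j$ is stochastic and $\mathbf{W}_j\otimes A$ satisfies the UEPP. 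Since $\mathbf{W}_j$ and $L_j$ share eigenvectors and $\mu\mapsto 1-\alpha_j\mu$ is injective, properties (P1)/(P2) pass directly to $\mathbf{W}_j$, yielding (F1) and (F2); also $\mathbf{W}_j\mathbf{1}=\mathbf{1}$ because $L_j\mathbf{1}=0$, and the sparsity of $\mathbf{W}_j$ matches $\mathcal{G}_j$ by Definition~\ref{def_wlm}(i). Given (F2) and the UEPP, (F3) for each $j\in\{1,2\}$ is immediate from Lemma~\ref{lemma_eigen_rep} applied to $(\mathbf{W}_j,A)$ (every $\lambda\in\Lambda_U(\mathbf{W}_j\otimes A)$ is nonzero), once one notes that, $\mathbf{W}_j$ being stochastic, $|\lambda_{\mathbf{W}}|\le 1$ forces the companion factor $\lambda_A$ to satisfy $|\lambda_A|\ge 1$, i.e.\ $\lambda_A\in\Lambda_U(A)$.

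It remains to build the block-row $(\mathbf{W}_{31},\mathbf{W}_{32},\mathbf{W}_{33})$ indexed by $\mathbb{V}_3:=\mathbb{V}\setminus(\mathbb{V}_1\cup\mathbb{V}_2)$. The zero blocks relating $\mathbb{V}_1,\mathbb{V}_2,\mathbb{V}_3$ in \eqref{eq_lemma_weight_matrix_01} are forced: no edge enters a source component, so any weight $\mathbf{w}_{vu}$ with $v$ in a source component and $u$ outside that component must vanish; (F0) is then met by only ever placing weight on edges of $\mathcal{G}$ and letting each row of $\mathbf{W}$ sum to one. The only substantive requirement left is (F4), i.e.\ $\rho(\mathbf{W}_{33})\cdot\rho(A)<1$, where $\rho(\cdot)$ denotes spectral radius. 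Here I would use the condensation DAG: the vertices of $\mathbb{V}_3$ lie in the non-source strong components $D_1,\dots,D_p$ of $\mathcal{G}$, which can be topologically ordered, so $\mathbf{W}_{33}$ is block lower-triangular with diagonal blocks $N_1,\dots,N_p$ ($N_\ell$ collecting the intra-$D_\ell$ weights) and $\rho(\mathbf{W}_{33})=\max_\ell\rho(N_\ell)$. For each $\ell$, Definition~\ref{def_source_comp} guarantees a vertex $v_\ell^\star\in D_\ell$ with an incoming edge from $\mathbb{V}\setminus D_\ell$; I would route a $(1-\delta)$-fraction of $v_\ell^\star$'s unit row-mass onto that external in-neighbor, route all mass of any other vertex of $D_\ell$ possessing an in-neighbor outside $D_\ell$ likewise outward, and route every remaining vertex's full unit of mass onto an in-neighbor one step nearer to $v_\ell^\star$ along a shortest path inside $D_\ell$. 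A short combinatorial argument then shows that, started from any vertex, the induced sub-stochastic chain reaches $v_\ell^\star$ (and sheds a $(1-\delta)$-fraction) within $|D_\ell|\le m$ steps, whence $\|N_\ell^{\,m}\|_\infty\le\delta$ and $\rho(N_\ell)\le\delta^{1/m}$; choosing $\delta$ so small that $\delta^{1/m}\rho(A)<1$ (the case $\rho(A)=0$ being trivial) establishes (F4).

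The step I expect to be the main obstacle is this last construction: the intra-$\mathbb{V}_3$ weights must simultaneously be supported only on edges of $\mathcal{G}$, respect the row-sum-one constraint (so in-degree-one vertices are forced to carry a weight of exactly $1$ on a single in-neighbor, which rules out the naive ``make every entry tiny'' fix), and still push $\rho(\mathbf{W}_{33})$ below the prescribed threshold $1/\rho(A)$. The ``route toward the leaky vertex of each strong component'' device reconciles these demands, and its crux is the estimate that the induced chain is killed within $m$ steps; everything else---transporting (P1)/(P2) through $I-\alpha L$, invoking Lemma~\ref{lemma_eigen_rep}, and verifying the zero blocks and row sums for (F0)---is routine bookkeeping. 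I would close by remarking that an arbitrary number of source components is handled by the identical argument, applied block-row by block-row over a topological ordering of all strong components of $\mathcal{G}$.
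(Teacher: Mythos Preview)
Your treatment of \textbf{(F0)}--\textbf{(F3)} is identical to the paper's: pick $L_i\in\mathbb{L}(\mathcal{G}_i)$ via Theorem~\ref{thm_wlm}, set $\mathbf{W}_i=I-\alpha_iL_i$ with $\alpha_i$ from Lemma~\ref{lemma_unique_rep}(ii), and read off (F1)--(F2) from (P1)--(P2) and (F3) from Lemma~\ref{lemma_eigen_rep} plus stochasticity. The only genuine divergence is in \textbf{(F4)}. The paper does not pass through the condensation DAG; instead it observes that $\mathbb{V}_3$ can be spanned by a disjoint forest of rooted trees (edges of $\mathcal{G}$) whose roots have in-neighbors in $\mathbb{V}_1\cup\mathbb{V}_2$, so after a permutation $\mathbf{W}_{33}$ can be taken \emph{lower triangular} (not merely block lower triangular). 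Each row then has at least two admissible positions---the diagonal and the tree-parent (or the external in-neighbor for a root)---so one sets the diagonal entries of $\mathbf{W}_{33}$ to an arbitrarily small $\epsilon$ and dumps the remaining $1-\epsilon$ on the parent/external entry; the eigenvalues of $\mathbf{W}_{33}$ are then exactly its diagonal entries, and (F4) follows without any spectral-radius estimate. Your route---block-triangularize by strong components, then make each diagonal block a sub-stochastic ``leaky'' chain that reaches a designated escape vertex in at most $|D_\ell|$ steps, yielding $\rho(N_\ell)\le\delta^{1/m}$---is correct and self-contained, but it trades the one-line triangular argument for a Markov-chain hitting-time bound. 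What the paper's approach buys is simplicity (the eigenvalues are read off directly); what yours buys is that it never leaves the strong-component decomposition, which some readers may find more structural. Either way the conclusion is the same.
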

\begin{proof}
	As we shall see later, our construction of $\mathbf{W}$ automatically guarantees \textbf{(F0)}. Thus, we will focus on showing the facts \textbf{(F1)}-\textbf{(F4)}.

	Consider $\mathbf{W}_i = I_{|\mathbb{V}_i|} - \alpha_i L_i$ for $i \in \{1, 2\}$, where $\alpha_i$ is a positive real number and $L_i \in \mathbb{L}(\mathcal{G}_i)$. By Theorem \ref{thm_wlm}, we can select $L_i$ that satisfies \textbf{(P1)} and \textbf{(P2)} of Theorem \ref{thm_wlm}. This choice of $L_i$ leads to $\mathbf{W}_i$ that satisfies \textbf{(F1)} and \textbf{(F2)}.
	
	According to Lemma \ref{lemma_unique_rep}, we can choose $\alpha_i$ such that $\mathbf{W}_i$ is stochastic and $\mathbf{W}_i \otimes A$ satisfies the UEPP (see Definition \ref{def_uepp}). Since $\mathbf{W}_i$ is stochastic, its eigenvalues lie in or inside the unit circle; hence, an eigenvalue $\lambda \in~\Lambda_U \left( \mathbf{W}_i \otimes A\right)$ can be written as $\lambda = \lambda_{\mathbf{W}_i} \cdot \lambda_A$ for some $\lambda_{\mathbf{W}_i} \in sp \left( \mathbf{W}_i \right)$ and $\lambda_A \in \Lambda_U \left( A \right)$. Along with \textbf{(F2)} and the fact that $\mathbf{W}_i \otimes A$ satisfies the UEPP, using Lemma~\ref{lemma_eigen_rep}, we can verify that \textbf{(F3)} holds.

	\textbf{Auxiliary fact:} To show \textbf{(F4)}, we claim that there exist $\mathbf{W}_{31}$,  $\mathbf{W}_{32}$, and $\mathbf{W}_{33}$ such that each row of $\begin{pmatrix} \mathbf{W}_{31} & \mathbf{W}_{32} & \mathbf{W}_{33} \end{pmatrix}$ sums to one and all the eigenvalues of $\mathbf{W}_{33}$ are arbitrarily small; hence, for this choice of $\mathbf{W}_{31}$,  $\mathbf{W}_{32}$, and $\mathbf{W}_{33}$, it holds that $\Lambda_U \left( \mathbf{W}_{33} \otimes A \right) = \emptyset$. This proves \textbf{(F4)}, and it remains to prove the claim.
	
	\textbf{Proof of the auxiliary fact:} Recall that, due to \textbf{(F0)}, $\mathbf{W}$ needs to be consistent with $\mathcal{G}$, which restricts the choice of the elements of $\mathbf{W}_{31}$, $\mathbf{W}_{32}$, and $\mathbf{W}_{33}$. Let $\mathbb{V}_3 = \mathbb{V} \setminus \left( \mathbb{V}_1 \cup \mathbb{V}_2 \right)$. Notice that $\mathbb{V}_3$ can be \textit{spanned} by a collection of disjoint rooted trees in which each tree is rooted at a vertex of $\mathbb{V}_3$.

Since vertices in $\mathbb{V}_3$ do not belong to any of the source components, $\mathcal{G}_1$ and $\mathcal{G}_2$, there is a directed path from at least one of the source components to each vertex in $\mathbb{V}_3$. Hence, we may assume that each of the disjoint rooted trees that span $\mathbb{V}_3$ has the root which is a neighbor of $\mathbb{V}_1$ or $\mathbb{V}_2$ in $\mathcal{G}$, i.e., there is an edge from $\mathbb{V}_1$ or $\mathbb{V}_2$ to the root (of each rooted tree) in $\mathcal{G}$.

	Notice that after a permutation if necessary, a matrix that is consistent with a rooted tree is lower triangular. For this reason,  we may assume that $\mathbf{W}_{33}$ is lower triangular. 
		
	Also note that since there is a directed path from $\mathbb{V}_1$ or $\mathbb{V}_2$ to every vertex in $\mathbb{V}_3$, at least two elements, which include one diagonal element of $\mathbf{W}_{33}$, of each row of $\begin{pmatrix} \mathbf{W}_{31} & \mathbf{W}_{32} & \mathbf{W}_{33} \end{pmatrix}$ can be chosen to be non-zero. By properly choosing the lower triangular elements of $\mathbf{W}_{33}$ and elements of $\mathbf{W}_{31}$ and $\mathbf{W}_{32}$, we can make the diagonal elements of $\mathbf{W}_{33}$ arbitrarily small and each row of $\begin{pmatrix} \mathbf{W}_{31} & \mathbf{W}_{32} & \mathbf{W}_{33} \end{pmatrix}$ sums to one. Since $\mathbf{W}_{33}$ is a lower triangular matrix with arbitrarily small diagonal elements, all the eigenvalues of $\mathbf{W}_{33}$ are arbitrarily small. This proves the claim.	
\end{proof}

\quad\textit{Proof of Theorem \ref{thm_main}:} 

	\textbf{Sufficiency:} If we can choose a matrix $\mathbf{W}$, whose sparsity pattern is consistent with $\mathcal{G}$ and which satisfies $\mathbf{W} \cdot \mathbf{1} = \mathbf{1}$, such that no unstable fixed modes exist in $\left( \mathbf{W} \otimes A, \overline{B}, \overline{H}\right)$, then, by Theorem \ref{thm_decentralized_controller}, it follows the existence of gain matrices $\{\mathbf{K}_i, \mathbf{P}_i, \mathbf{Q}_i, \mathbf{S}_i\}_{i \in \mathbb{V}}$ that, in conjunction with the chosen $\mathbf{W}$, are omniscience-achieving. For this reason, we only prove the existence of a weight matrix $\mathbf{W}$ such that there are no unstable fixed modes in $\left( \mathbf{W} \otimes A, \overline{B}, \overline{H}\right)$.

	Without loss of generality, suppose $\mathcal{G}$ has 2 source components $\mathcal{G}_1 = \left( \mathbb{V}_1, \mathbb{E}_1 \right)$ and $\mathcal{G}_2=~\left( \mathbb{V}_2, \mathbb{E}_2 \right)$ and select a matrix $\mathbf{W}$ that satisfies \textbf{(F0)}-\textbf{(F4)} of Lemma \ref{lemma_weight_matrix}. In what follows, we verify the rank condition presented in Proposition \ref{prop_fixed_modes} to show that there are no unstable fixed modes in $\left( \mathbf{W} \otimes A, \overline{B}, \overline{H} \right)$. 

For any $\mathbb{J} \subseteq \mathbb{V}$ and its complement $\mathbb{J}^{c} = \mathbb{V} \setminus \mathbb{J}$, we define $\mathbb{J}_1 = \mathbb{V}_1 \cap \mathbb{J}$ and $\mathbb{J}_2 = \mathbb{V}_2 \cap \mathbb{J}$, and their complements $\mathbb{J}_1^{c} = \mathbb{V}_1 \setminus \mathbb{J}_1$ and $\mathbb{J}_2^{c} = \mathbb{V}_2 \setminus \mathbb{J}_2$, respectively. Also, for notational convenience, let $\mathbb{V}_3 = \mathbb{V} \setminus \left( \mathbb{V}_1 \cup \mathbb{V}_2 \right)$. Then, for $\lambda \in \Lambda_U(\mathbf{W} \otimes A)$, we can see that the following relation holds:
	
	  \allowdisplaybreaks\begin{align} \label{eq_thm_existence_02}
		 	&rank \left( \begin{array} {c | c} \mathbf{W} \otimes A - \lambda I_{|\mathbb{V}| \cdot n} & \overline{B}_{\mathbb{J}} \\ \hline
				\overline{H}_{\mathbb{J}^{c}} & \mathbf{0} \end{array} \right) \nonumber \\
			&\overset{\text{(i)}}{=} rank \left( \begin{array} {c | c} \begin{array} {c c c} \mathbf{W}_{1} \otimes A - \lambda I_{|\mathbb{V}_1| \cdot n} & \mathbf{0} & \mathbf{0} \\ \mathbf{0} & \mathbf{W}_{2} \otimes A - \lambda I_{|\mathbb{V}_2| \cdot n} & \mathbf{0}  \\ \mathbf{W}_{31} \otimes A & \mathbf{W}_{32} \otimes A & \mathbf{W}_{33} \otimes A - \lambda I_{|\mathbb{V}_3| \cdot n} \end{array} & \overline{B}_{\mathbb{J}} \\ \hline
				\overline{H}_{\mathbb{J}^{c}} & \mathbf{0} \end{array} \right) \nonumber \\
			&\overset{\text{(ii)}}{\geq} rank \left( \begin{array} {c | c} \begin{array} {c c c} \mathbf{W}_{1} \otimes A - \lambda I_{|\mathbb{V}_1| \cdot n} & \mathbf{0} & \mathbf{0} \\ \mathbf{0} & \mathbf{W}_{2} \otimes A - \lambda I_{|\mathbb{V}_2| \cdot n} & \mathbf{0}  \\ \mathbf{W}_{31} \otimes A & \mathbf{W}_{32} \otimes A & \mathbf{W}_{33} \otimes A - \lambda I_{|\mathbb{V}_3| \cdot n} \end{array} & \begin{array} {c c} \overline{B}_{\mathbb{J}_1} & \overline{B}_{\mathbb{J}_2} \end{array} \\ \hline
				\begin{array} {c} \overline{H}_{\mathbb{J}_1^{c}} \\ \overline{H}_{\mathbb{J}_2^{c}} \end{array} & \mathbf{0} \end{array} \right) \nonumber \\
		\begin{split}
			&\overset{\text{(iii)}}{=} rank \left( \begin{array} {c | c} \begin{array} {c c c} \mathbf{W}_{1} \otimes A - \lambda I_{|\mathbb{V}_1| \cdot n} & \mathbf{0} & \mathbf{0} \\ \mathbf{0} & \mathbf{0} & \mathbf{0}  \\ \mathbf{0} & \mathbf{0} & \mathbf{0} \end{array} & \overline{B}_{\mathbb{J}_1} \\ \hline
				\overline{H}_{\mathbb{J}_1^{c}} & \mathbf{0} \end{array} \right) 
			+ rank \left(\begin{array} {c | c} \begin{array} {c c c} \mathbf{0} & \mathbf{0} & \mathbf{0} \\ \mathbf{0} & \mathbf{W}_{2} \otimes A - \lambda I_{|\mathbb{V}_2| \cdot n} & \mathbf{0} \\ \mathbf{0} & \mathbf{0} & \mathbf{0} \end{array} & \overline{B}_{\mathbb{J}_2} \\ \hline 
				\overline{H}_{\mathbb{J}_2^{c}} & \mathbf{0} \end{array} \right) \\
				& \quad \quad + |\mathbb{V}_3| \cdot n
		\end{split}
	\end{align}
	
	In order to explain why the equalities and inequality in \eqref{eq_thm_existence_02} hold, we notice that  (i) follows directly from \textbf{(F0)} of Lemma \ref{lemma_weight_matrix}, (ii) holds by the fact that $\mathbb{J}_1, \mathbb{J}_2 \subseteq \mathbb{J}$ and $\mathbb{J}_1^{c}, \mathbb{J}_2^{c} \subseteq \mathbb{J}^{c}$, and (iii) holds by \textbf{(F4)} of Lemma \ref{lemma_weight_matrix} and by the definition of $\overline{B}$ and $\overline{H}$ in \eqref{eq_variable_def}.
	
	If $\mathbb{J}_1$ is not empty then by \textbf{(F1)}, \textbf{(F3)} of Lemma \ref{lemma_weight_matrix} and by the definition of $\overline{B}$ in \eqref{eq_variable_def}, it holds that 
	\begin{equation} \label{eq_thm_existence_03}
		rank \left( \begin{array} {c | c} \begin{array} {c c c} \mathbf{W}_{1} \otimes A - \lambda I_{|\mathbb{V}_1| \cdot n} & \mathbf{0} & \mathbf{0} \\ \mathbf{0} & \mathbf{0} & \mathbf{0}  \\ \mathbf{0} & \mathbf{0} & \mathbf{0} \end{array} & \overline{B}_{\mathbb{J}_1} \\ \hline
			\overline{H}_{\mathbb{J}_1^{c}} & \mathbf{0}  \end{array} \right) 
		\geq rank \left( \begin{array} {c | c} \begin{array} {c} \mathbf{W}_{1} \otimes A - \lambda I_{|\mathbb{V}_1| \cdot n} \\ \mathbf{0} \\ \mathbf{0} \end{array} & \overline{B}_{\mathbb{J}_1} \end{array} \right) = |\mathbb{V}_1| \cdot n.
	\end{equation}
	Otherwise, since $\mathbb{J}_1^{c} = \mathbb{V}_1$ and $\left(A, H_{\mathbb{V}_1} \right)$ is detectable (by the detectability condition of Theorem~\ref{thm_main}), by \textbf{(F1)}, \textbf{(F3)}, and the definition of $\overline{H}$ in \eqref{eq_variable_def}, it holds that
	\begin{equation} \label{eq_thm_existence_04}
		rank \left( \begin{array} {c | c} \begin{array} {c c c} \mathbf{W}_{1} \otimes A - \lambda I_{|\mathbb{V}_1| \cdot n} & \mathbf{0} & \mathbf{0} \\ \mathbf{0} & \mathbf{0} & \mathbf{0} \\ \mathbf{0} & \mathbf{0} & \mathbf{0} \end{array} & \overline{B}_{\mathbb{J}_1} \\ \hline
			\overline{H}_{\mathbb{J}_1^{c}} & \mathbf{0} \end{array} \right) 
		= rank \left( \begin{array} {c} \begin{array} {c c c} \mathbf{W}_{1} \otimes A - \lambda I_{|\mathbb{V}_1| \cdot n} & \mathbf{0} & \mathbf{0} \end{array} \\  \hline \overline{H}_{\mathbb{V}_1} \end{array} \right) = |\mathbb{V}_1| \cdot n.
	\end{equation}
	
	A similar relation holds for the second term in the last line of \eqref{eq_thm_existence_02}. Thus, by \eqref{eq_thm_existence_02}-\eqref{eq_thm_existence_04}, we conclude that for any subset $\mathbb{J} \subseteq \mathbb{V}$ and its complement $\mathbb{J}^{c} = \mathbb{V} \setminus \mathbb{J}$, it holds that
	\begin{equation}
	 	\begin{split}
		 	rank \begin{pmatrix} \mathbf{W} \otimes A - \lambda I_{|\mathbb{V}| \cdot n} & \overline{B}_{\mathbb{J}} \\ \overline{H}_{\mathbb{J}^{c}} & 0 \end{pmatrix} \geq |\mathbb{V}_1| \cdot n + |\mathbb{V}_2| \cdot n + |\mathbb{V}_3| \cdot n = |\mathbb{V}| \cdot n.
		\end{split}
	\end{equation}
	The non-existence of unstable fixed modes in $(\mathbf{W} \otimes A, \overline{B}, \overline{H} )$ follows from Proposition \ref{prop_fixed_modes}.
		
	\textbf{Necessity:} Without loss of generality, we suppose that a subsystem $\left( A, H_{\mathbb{V}_1} \right)$ associated with a source component $\mathcal{G}_1 = \left( \mathbb{V}_1, \mathbb{E}_1 \right)$ of $\mathcal{G}$ is not detectable. We will show that the observers represented by the vertices in $\mathbb{V}_1$ cannot achieve omniscience.

	Let $\mathbb{V}_1 = \{1, \cdots, m_1\}$, and $\mathbf{W}_1$ be any matrix whose sparsity pattern is consistent with $\mathcal{G}_1$ and that satisfies $\mathbf{W}_1 \cdot \mathbf{1} = \mathbf{1}$. Since $\mathcal{G}_1$ is a source component, there is no incoming edge to $\mathbb{V}_1$ from $\mathbb{V} \setminus \mathbb{V}_1$ in $\mathcal{G}$; hence, as can be seen in \eqref{eq_error_dynamics_col} the estimation error $\widetilde{x}_i$ and the augmented state $z_i$ for $i \in \mathbb{V}_1$ do not depend on $\widetilde{x}_j$ and $z_j$ for $j \in \mathbb{V} \setminus \mathbb{V}_1$. For this reason, the portion of the state-space representation of the error dynamics \eqref{eq_error_dynamics_col} associated with $\mathcal{G}_1$ can be written as follows:
	\begin{equation} \label{eq_thm_existence_05}
		\begin{split}
			\begin{pmatrix} \widetilde{x}' (k+1) \\ z' (k+1)\end{pmatrix} = \begin{pmatrix} \mathbf{W}_{1} \otimes A - \overline{B}' \overline{\mathbf{K}}' \overline{H}' & - \overline{B}' \overline{\mathbf{P}}' \\ \overline{\mathbf{Q}}' \overline{H}' & \overline{\mathbf{S}}' \end{pmatrix} \begin{pmatrix} \widetilde{x}' (k) \\ z' (k)\end{pmatrix}
		\end{split}
	\end{equation}
	with
	\begin{equation*}
		\begin{split}
			&\widetilde{x}' = \left( \widetilde{x}_1^T, \cdots, \widetilde{x}_{m_1}^T \right)^T, \quad z' = \left( z_1^T, \cdots, z_{m_1}^T \right)^T \\
			&\overline{B}' = \begin{pmatrix} \overline{B}_1' & \cdots & \overline{B}_{m_1}' \end{pmatrix} \text{ with } \overline{B}_i' = e_i' \otimes I_n \\
			&\overline{H}' = \begin{pmatrix} \left(\overline{H}_1'\right)^T & \cdots & \left(\overline{H}_{m_1}'\right)^T \end{pmatrix}^T \text{ with } \overline{H}_i' = \left(e_i'\right)^T \otimes H_i \\
			&\overline{\mathbf{K}}' = diag \left(\mathbf{K}_1, \cdots, \mathbf{K}_{m_1} \right),
			\quad \overline{\mathbf{P}}' = diag \left(\mathbf{P}_1, \cdots, \mathbf{P}_{m_1} \right) \\
			&\overline{\mathbf{Q}}' = diag \left(\mathbf{Q}_1, \cdots, \mathbf{Q}_{m_1} \right),
			\quad \overline{\mathbf{S}}' = diag \left(\mathbf{S}_1, \cdots, \mathbf{S}_{m_1} \right), \\
		\end{split}
	\end{equation*}
	where $e_i'$ is the $i$-th column of the $m_1$-dimensional identity matrix.

	Since we assume that the subsystem $\left( A, H_{\mathbb{V}_1} \right)$ is not detectable, it holds that \\ $rank \begin{pmatrix} \mathbf{W}_{1} \otimes A - \lambda I_{|\mathbb{V}_1| \cdot n} \\ \overline{H}_{\mathbb{V}_1}' \end{pmatrix} < |\mathbb{V}_1| \cdot n$ for some $\lambda \in \Lambda_U(\mathbf{W}_{1} \otimes A)$. Hence, by Proposition~\ref{prop_fixed_modes}, no gain parameters $\left\{ \mathbf{K}_{i}, \mathbf{P}_{i}, \mathbf{Q}_{i}, \mathbf{S}_{i} \right\}_{i \in \mathbb{V}_1}$ stabilize \eqref{eq_thm_existence_05}. Since $\mathbf{W}_1$ is chosen arbitrarily, we conclude that omniscience-achieving parameters do not exist, and the observers represented by the vertices in $\mathbb{V}_1$ cannot achieve omniscience. This proves the necessity. \hfill \QED
		
\section{Conclusions} \label{sec_conclusion}
We described a parametrized class of LTI distributed observers for state estimation of a LTI plant, where information exchange among observers is constrained by a pre-selected communication graph. We developed necessary and sufficient conditions for the existence of parameters for a LTI distributed observer that achieves asymptotic omniscience. These conditions can be described by the detectability of certain subsystems of the plant that are associated with source components of the communication graph.

\begin{appendices}
\section{The Proof of Lemma \ref{lemma_unique_rep}} \label{appen_lemma_proof}
	\textbf{Proof of (i)}: Let $\{\mu_1, \cdots, \mu_s\}$ and $\{\lambda_1, \cdots, \lambda_t\}$ be the sets of  distinct non-zero eigenvalues of $A$ and $L$, respectively. Under the choice $\mathbf{W}=I_m - \alpha L$, we can observe that if the UEPP of $\mathbf{W} \otimes A$ does not hold then $(1- \alpha \lambda) \mu = (1- \alpha \lambda') \mu'$ for some $\lambda, \lambda' \in \{\lambda_1, \cdots, \lambda_t\}$ and $\mu, \mu' \in \{\mu_1, \cdots, \mu_s\}$ where $\lambda \neq \lambda'$ and $\mu \neq \mu'$. Since the sets of distinct eigenvalues of $A$ and $L$ are both finite, we conclude that the set of values of $\alpha$ for which the UEPP does not hold is finite. Hence, for almost every positive number $\alpha$, $\mathbf{W} \otimes A$ satisfies the UEPP.
	
	\textbf{Proof of (ii)}: If $L$ is a WLM then for sufficiently small $\alpha > 0$, we can see that $\mathbf{W}=I_m - \alpha L$ becomes a stochastic matrix. Thus, using the proof of \textbf{(i)}, we conclude that, for sufficiently small $\alpha > 0$ except for finitely many points, $\mathbf{W}$ becomes a stochastic matrix and $\mathbf{W} \otimes A$ satisfies the UEPP. \hfill \QED

\section{The Proof of Theorem \ref{thm_wlm}} \label{appen_weighted_laplacian}
In this section, we provide a proof of Theorem \ref{thm_wlm}. The proof hinges on some results on structured linear system theory \cite{dion2003_automatica, reinschke1988_springer}. To this end, we briefly review the structural controllability and observability of structured linear systems in Appendix \ref{subsection_struct_cont_observ} and provide the detailed proof of Theorem \ref{thm_wlm} in Appendix \ref{subsection_proof_thm_wlm}.

\subsection{Structural Controllability and Observability} \label{subsection_struct_cont_observ}
Consider a graph $\overline{\mathcal{G}} = \left( \overline{\mathbb{V}}, \overline{\mathbb{E}} \right)$ with $\overline{\mathbb{V}}=\{1, \cdots, m\}$ and an associated structured linear system described as follows:
\begin{equation} \label{eq_struct_linear_system}
	\begin{split}
		x(k+1) & = [A] x(k) + [b_i] u(k) \\
		y(k) & = [h_j]^T x(k)
	\end{split}
\end{equation}
where $[A] \in \mathbb{R}^{m \times m}$ is a structure matrix, and $[b_i] \in \mathbb{R}^{m}$ and $[h_j] \in \mathbb{R}^{m}$ are structure vectors. Based on respective sparse structures, the elements of the structure matrix and vectors are either zero or indeterminate. In particular, $[A]$ is consistent with the graph $\overline{\mathcal{G}}$\footnote{The structure matrix $[A]$ is consistent with the graph $\overline{\mathcal{G}}$ if the following hold: (i) the $(i,j)$-th element of $[A]$ is zero if $(j,i) \notin \overline{\mathbb{E}}$ and (ii) the $(i,j)$-th element of $[A]$ is indeterminate if $(j,i) \in \overline{\mathbb{E}}$.}, and $[b_i]$ and $[h_j]$ are vectors whose elements are all zero except the $i$-th element and $j$-th element, respectively. Then, there are $|\overline{\mathbb{E}}|+2$ indeterminate elements of $[A]$, $[b_i]$, and $[h_j]$, and these indeterminate elements can be represented by vectors in $\mathbb{R}^{|\overline{\mathbb{E}}|+2}$. In other words, the vectors in $\mathbb{R}^{|\mathbb{E}|+2}$ specify all \textit{numerical realizations} of the structured linear system \eqref{eq_struct_linear_system}. 

The following Definition describes the structural controllability and observability of a structured linear system. As the underlying (sparse) structure of a structured linear system depends on its associated graph, we can characterize the structural controllability and observability in terms of the associated graph, which is specified in Proposition~\ref{prop_struct_cont_observ}.

\begin{mydef}
	Let  a graph $\overline{\mathcal{G}} = \left( \overline{\mathbb{V}},\overline{\mathbb{E}} \right)$ and an associated structured linear system as in \eqref{eq_struct_linear_system} be given. Let $p \in \mathbb{R}^{|\overline{E}|+2}$ be a vector that specifies a numerical realization of the structured system. The pair $\left( [A], [b_i] \right)$ is said to be \textit{structurally controllable} if for almost all $p \in \mathbb{R}^{|\overline{E}|+2}$, the resultant numerical realization of $\left( [A], [b_i] \right)$ is controllable. The \textit{structural observability} is similarly defined for the pair $\left( [A], [h_j]^T\right)$.
\end{mydef}

\begin{prop} \label{prop_struct_cont_observ}
	Let a graph $\overline{\mathcal{G}} = \left( \overline{\mathbb{V}},\overline{\mathbb{E}} \right)$ and an associated structured linear system as in \eqref{eq_struct_linear_system} be given. If $\overline{\mathcal{G}}$ is strongly connected and all its vertices have a loop, i.e, $(i, i) \in \overline{\mathbb{E}}$, then for all $i,j \in \overline{\mathbb{V}}$, the pair $\left( [A], [b_i] \right)$ is structurally controllable and the pair $\left( [A], [h_j]^T\right)$ is structurally observable.
\end{prop}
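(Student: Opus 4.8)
The plan is to reduce the claim to the classical graph-theoretic characterization of structural controllability (see \cite{dion2003_automatica, reinschke1988_springer}), which states that a structured pair $\left([A],[b_i]\right)$ is structurally controllable if and only if (a) in the digraph $\overline{\mathcal{G}}$ augmented with the input vertex and the edge into vertex $i$, every state vertex is reachable from the input vertex (accessibility), and (b) the generic rank of the structured matrix $\begin{pmatrix} [A] & [b_i] \end{pmatrix}$ equals $m$ (equivalently, the augmented digraph contains no dilation). Structural observability of $\left([A],[h_j]^T\right)$ will then follow by duality.

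First I would verify (a): since $\overline{\mathcal{G}}$ is strongly connected, there is a directed path from vertex $i$ to every other vertex of $\overline{\mathbb{V}}$; prepending the input edge into $i$, every state vertex is accessible from the input vertex, so (a) holds for each $i \in \overline{\mathbb{V}}$. Next I would verify (b): because every vertex of $\overline{\mathcal{G}}$ carries a loop, all diagonal entries of $[A]$ are indeterminate, so the numerical realization obtained by setting every off-diagonal indeterminate to zero is a diagonal matrix whose determinant is the product of the diagonal indeterminates, a nonzero polynomial in the parameters. Hence the generic rank of $[A]$ — and a fortiori of $\begin{pmatrix} [A] & [b_i] \end{pmatrix}$ — equals $m$, so (b) holds. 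By the characterization, $\left([A],[b_i]\right)$ is structurally controllable for every $i \in \overline{\mathbb{V}}$.

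For the observability statement I would use the duality that $\left([A],[h_j]^T\right)$ is structurally observable if and only if $\left([A]^T,[h_j]\right)$ is structurally controllable. The matrix $[A]^T$ is consistent with the graph obtained from $\overline{\mathcal{G}}$ by reversing every edge; this reverse graph is again strongly connected and still has a loop at every vertex, and $[h_j]$ is a structure vector supported on its $j$-th component. Applying the controllability result just proved to $\left([A]^T,[h_j]\right)$ then yields structural observability of $\left([A],[h_j]^T\right)$ for every $j \in \overline{\mathbb{V}}$.

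I do not expect a serious obstacle: the argument is a direct application of a known theorem once its hypotheses are checked. The only points that need care are (i) invoking the characterization in a form whose two conditions — accessibility and full generic rank / absence of dilation — are both immediate from strong connectivity and the presence of loops, (ii) matching the ``for almost all numerical realizations'' of the Definition with the generic (proper-algebraic-variety) notion used in that characterization, and (iii) observing that reversing all edges preserves both strong connectivity and the loop at every vertex, which is what makes the duality step go through verbatim.
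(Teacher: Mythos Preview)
Your proposal is correct and takes essentially the same approach as the paper: the paper's proof simply cites Theorem~1 of \cite{dion2003_automatica} and omits all details, whereas you spell out the verification of the two hypotheses (accessibility from strong connectivity, full generic rank from the loops) and the duality step for observability. Nothing further is needed.
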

\begin{proof}
	The proof directly follows from relevant results from the structured linear system literature (see, for instance, Theorem 1 in \cite{dion2003_automatica}). The detail is omitted for brevity.
\end{proof}

\subsection{The Proof of Theorem \ref{thm_wlm}} \label{subsection_proof_thm_wlm}
	The following Lemma is used in the proof of Theorem \ref{thm_wlm}.
\begin{lemma} \label{lemma_wlm_graph}
	Given a strongly connected graph $\mathcal{G} = \left( \mathbb{V}, \mathbb{E} \right)$, the following hold for a fixed vertex $r \in \mathbb{V}$:
	\begin{itemize}
		\item[\textbf{(i)}] There exists $L_1 \in \mathbb{L}(\mathcal{G})$ for which the pair $\left( L_1, e_r^T \right)$ is observable.
		\item[\textbf{(ii)}] There exists $L_2 \in \mathbb{L}(\mathcal{G})$ for which the pair $\left( L_2, e_r \right)$ is controllable.
		\item[\textbf{(iii)}] There exists $L_3 \in \mathbb{L}(\mathcal{G})$ for which all the eigenvalues of $L_3$ are simple.
	\end{itemize}
where $e_r$ is the $r$-th column of the $m$-dimensional identity matrix.
\end{lemma}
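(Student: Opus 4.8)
The plan is to reduce all three claims to a single genericity statement over the natural parameter space of $\mathbb{L}(\mathcal{G})$. Parametrize a WLM $L=(l_{ij})_{i,j\in\mathbb{V}}$ by its off-diagonal weights $\{l_{ji}\}_{(i,j)\in\mathbb{E}}$: by Definition \ref{def_wlm}, the entries not supported on $\mathbb{E}$ vanish and the diagonal is pinned down through $l_{ii}=-\sum_{j\neq i}l_{ij}$, so $\mathbb{L}(\mathcal{G})$ is the image, under an injective linear map, of the open set of entrywise-negative vectors in $\mathbb{R}^{|\mathbb{E}|}$. For each desired property the ``bad'' set of parameters is the zero set of a polynomial in those weights: the determinant of the $m\times m$ observability matrix with rows $e_r^T, e_r^T L, \dots, e_r^T L^{m-1}$ for \textbf{(i)}; the analogous controllability-matrix determinant for \textbf{(ii)}; and the discriminant of $\det(\lambda I - L)$ for \textbf{(iii)}. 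Since a real polynomial that is not identically zero vanishes only on a set of Lebesgue measure zero, it suffices to show each of these polynomials is not identically zero on the parameter space; then almost every $L\in\mathbb{L}(\mathcal{G})$ --- in particular some $L_1$, $L_2$, $L_3$ --- works.

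For \textbf{(i)} and \textbf{(ii)} the idea is to invoke structured system theory. Let $\overline{\mathcal{G}}$ be $\mathcal{G}$ with a self-loop added at every vertex; then $\overline{\mathcal{G}}$ is strongly connected and every vertex carries a loop, and any $L\in\mathbb{L}(\mathcal{G})$ --- whose off-diagonal support is consistent with $\mathcal{G}$ and whose diagonal is (strictly positive and) nonzero --- is a numerical realization of the structure matrix $[A]$ consistent with $\overline{\mathcal{G}}$. By Proposition \ref{prop_struct_cont_observ}, $([A],e_r^T)$ is structurally observable and $([A],e_r)$ is structurally controllable, hence the observability and controllability determinants are not identically zero as polynomials in the $|\mathbb{E}|+m$ free entries of $[A]$. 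The catch --- and the step I expect to be the main obstacle --- is that $\mathbb{L}(\mathcal{G})$ is a \emph{proper} subvariety of this realization space, since the $m$ diagonal entries are not free but tied to the off-diagonal ones by the zero-row-sum constraint, and the ``almost all realizations'' conclusion of structured theory need not survive restriction to that subvariety. I would close this gap by exhibiting, for a spanning subgraph of $\overline{\mathcal{G}}$ realizing a suitable path-like (cactus) structure guaranteed by strong connectivity, an explicit assignment of the free weights for which the relevant determinant is nonzero; because nonvanishing of a polynomial is an open condition and genuine WLMs are dense near such a point (the negative orthant being open), this produces the required $L_1$ and $L_2$. Alternatively, one could argue directly via the path-sum expansion of the observability determinant that at least one monomial in the edge weights retains a nonzero coefficient after the diagonal substitution.

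For \textbf{(iii)} the same machinery applies with the discriminant of $\det(\lambda I - L)$. Here one fact is available for free: because $\mathcal{G}$ is strongly connected, $0$ is a \emph{simple} eigenvalue of every $L\in\mathbb{L}(\mathcal{G})$, since for large $c$ the matrix $cI-L$ is nonnegative and irreducible with Perron eigenvalue $c$ (eigenvector $\mathbf{1}$), which is simple by Perron--Frobenius. It then remains to produce one WLM whose remaining $m-1$ eigenvalues are simple and nonzero; choosing a spanning strongly connected subgraph and tuning its weights so that the induced WLM has pairwise distinct eigenvalues achieves this, and openness of the ``distinct spectrum'' condition shows the discriminant is not identically zero on $\mathbb{L}(\mathcal{G})$, yielding the desired $L_3$. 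Finally, combining \textbf{(i)}--\textbf{(iii)} over all choices of the root $r$ is precisely what will later upgrade these existence statements to the ``almost all'' properties \textbf{(P1)}--\textbf{(P2)} of Theorem \ref{thm_wlm}.
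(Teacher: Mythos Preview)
Your overall framework---reducing \textbf{(i)}--\textbf{(iii)} to the nonvanishing of certain polynomials in the off-diagonal weights---is exactly the machine the paper uses to prove Theorem~\ref{thm_wlm} once this Lemma is available. You have also put your finger on the real obstacle: Proposition~\ref{prop_struct_cont_observ} yields structural controllability/observability generically in the full $(|\mathbb{E}|+m)$-parameter realization space of $[A]$, but $\mathbb{L}(\mathcal{G})$ lives on the codimension-$m$ affine slice cut out by the row-sum constraints, and a polynomial that is generically nonzero on the ambient space can vanish identically on a subvariety.

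The gap is that your proposed fix does not close this. A cactus witness in the ambient realization space need not satisfy the row-sum constraints, and the sentence ``genuine WLMs are dense near such a point'' is false as stated: WLMs form a lower-dimensional affine slice and are nowhere dense in the ambient space. If instead you intend to work directly in the $|\mathbb{E}|$-dimensional off-diagonal parameter space and produce a witness on a boundary face (a spanning subgraph), you have not explained how the cactus construction---which treats the diagonal entries as free---survives after the diagonal is pinned by $l_{ii}=-\sum_{j\neq i}l_{ij}$. The path-sum alternative is plausible but is left as a hope rather than an argument.

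The paper closes the gap by a device you did not mention. It first picks, via Proposition~\ref{prop_struct_cont_observ}, a \emph{nonnegative irreducible} realization $A_1$ of $[A]$ with $(A_1,e_r^T)$ observable. Letting $\tilde v$ be the Perron eigenvector and $M=\mathrm{diag}(\tilde v)$, the matrix $M^{-1}A_1M$ has the same sparsity as $A_1$ and constant row sums $\tilde\lambda$, so $L_1:=I-\tilde\lambda^{-1}M^{-1}A_1M$ lands in $\mathbb{L}(\mathcal{G})$; observability is preserved under the similarity and the affine shift (same eigenvectors, and $e_r^TM$ is a nonzero scalar multiple of $e_r^T$), giving \textbf{(i)}. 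The same transform with a left Perron eigenvector handles \textbf{(ii)}. For \textbf{(iii)} the paper also replaces your unspecified ``tuning'' by a concrete construction: starting from any $L\in\mathbb{L}(\mathcal{G})$, it rescales the rows one at a time, using continuity of the spectrum to keep the nonzero eigenvalues simple at each stage, and then invokes strong connectivity (as you do) to conclude that the eigenvalue at $0$ is simple.
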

\begin{proof}
	The proof is in two parts: In the first part, we prove \textbf{(i)} and \textbf{(ii)} using Proposition~\ref{prop_struct_cont_observ} (in Appendix~\ref{subsection_struct_cont_observ}), and then we provide a constructive proof of \textbf{(iii)}.

	\textbf{Proof of (i) and (ii):} 
	Let $\overline{\mathcal{G}} = \left( \overline{\mathbb{V}}, \overline{\mathbb{E}} \right)$ be a graph that extends $\mathcal{G}$ in the following way: $\overline{\mathbb{V}} = \mathbb{V}$ and $\overline{\mathbb{E}} =~\mathbb{E} \cup~\left( \bigcup_{i \in \mathbb{V}} (i,i)\right)$, i.e., $\overline{\mathcal{G}}$ is precisely same as $\mathcal{G}$ except every vertex of $\overline{\mathcal{G}}$ has a loop. Consider a structured linear system $\left( [A], [b_r], [h_r]^T \right)$ that is associated with $\overline{\mathcal{G}}$ as in \eqref{eq_struct_linear_system}. By Proposition \ref{prop_struct_cont_observ}, we can find numerical realizations $\left(A_1, h_r^T \right)$ and $\left(A_2, b_r \right)$ that are respectively observable and controllable. In particular, we may choose $A_1$ and $A_2$ to be (element-wise) nonnegative.

	We compute $L_1$ from $A_1$ by applying a special similarity transform used in \cite{pasqualetti2012_ieee_tac}. This procedure is described as follows: By the Perron-Frobenius Theorem, we can find a right eigenvector $\tilde{v}$ (of $A_1$) with all positive entries, which corresponds to the Perron-Frobenius eigenvalue $\tilde{\lambda}$. Let $M$ be a diagonal matrix whose diagonal elements are the entries of $\tilde{v}$. Then, by applying a similarity transform to $\left(A_1, h_r^T \right)$ with $M$, we obtain $\left( M^{-1} A_1 M, h_r^T M\right)$. Since the observability is preserved under any similarity transform, the new pair $\left( M^{-1} A_1 M, h_r^T M\right)$ is also observable. Note that $M^{-1} A_1 M$ and $A_1$ have the same sparsity pattern, and so do $h_r^T$ and $h_r^T M$. 

	Let's define
	\begin{equation}
		\begin{split}
			L_1 &= I - \frac{1}{\tilde{\lambda}} M^{-1} A_1 M \\
		\end{split}
	\end{equation}
Notice that $L_1$ belongs to $\mathbb{L}(\mathcal{G})$, and that the eigenvectors of $L_1$ are same as those of $M^{-1} A_1 M$. Since $\left( M^{-1} A_1 M, h_r^T M\right)$ is observable, by the PBH rank test, we can see that $\left( L_1, e_r^T\right)$ is observable, where $e_r$ is the $r$-th column of the $m$-dimensional identity matrix.

	By a similar argument, we can explicitly find $L_2 \in \mathbb{L}(\mathcal{G})$ for which $\left( L_2, e_{r}\right)$ is a controllable pair. This proves the first part of the proof.

	\textbf{Proof of (iii):} Given $L \in \mathbb{L}(\mathcal{G})$, we represent $L$ as follows:
\begin{equation}
	L = \begin{pmatrix} l_1 & \cdots & l_m \end{pmatrix}^T,
\end{equation}
where $l_i^T$ is the $i$-th row of $L$. By re-scaling each row of $L$, we construct $L_3 \in \mathbb{L}(\mathcal{G})$ whose eigenvalues are all simple. 

First of all, it is not difficult to show that the following matrix has all simple eigenvalues except at the origin for $\alpha_1 > 0$.
\begin{equation} \label{eq_lemma_wlm_graph_induction_1}
	\begin{pmatrix} \alpha_1 l_1 & \mathbf{0} & \cdots & \mathbf{0} \end{pmatrix}^T \in \mathbb{R}^{m \times m},
\end{equation}
where $\mathbf{0} \in \mathbb{R}^{m}$ is the $m$-dimensional zero vector.

Suppose that the following matrix has all simple eigenvalues except at the origin for some $\alpha_i > 0, ~ i \in \{1, \cdots, k\}$.
\begin{equation} \label{eq_lemma_wlm_graph_induction_2}
	\begin{pmatrix} \alpha_1 l_1 & \cdots & \alpha_k l_k & \mathbf{0} & \cdots & \mathbf{0} \end{pmatrix}^T \in \mathbb{R}^{m \times m}
\end{equation}

Recall that the eigenvalues of a matrix depend continuously on the elements of the matrix. Since \eqref{eq_lemma_wlm_graph_induction_2} has all simple eigenvalues except at the origin, for sufficiently small $\alpha_{k+1} > 0$, the following matrix has all simple eigenvalues except at the origin.
\begin{equation} \label{eq_lemma_wlm_graph_induction_3}
	\begin{pmatrix} \alpha_1 l_1 & \cdots & \alpha_{k} l_{k} & \alpha_{k+1} l_{k+1} & \mathbf{0} & \cdots &\mathbf{0} \end{pmatrix}^T \in \mathbb{R}^{m \times m}
\end{equation}

By mathematical induction, we obtain
\begin{equation} \label{eq_lemma_wlm_graph_induction_3}
	L_3 = \begin{pmatrix} \alpha_1 l_1 & \cdots & \alpha_m l_m \end{pmatrix}^T \in \mathbb{L} \left( \mathcal{G} \right)
\end{equation}
such that $L_3$ has all simple eigenvalues except at the origin, where $\alpha_i > 0, ~ i \in \{1, \cdots, m\}$. Since $\mathcal{G}$ is a strongly connected graph, the eigenvalue of $L_3$ at the origin is also simple \cite{caughman2006_ej_combina}. This completes the last part of the proof.
\end{proof}

\quad\textit{Proof of Theorem \ref{thm_wlm}:} To begin with, for the given graph $\mathcal{G} = \left( \mathbb{V}, \mathbb{E} \right)$, we define the following sets and a natural bijective mapping:
	\begin{equation*}
		\begin{split}
			&\mathbb{L}_{1,r}^{c}(\mathcal{G}) \overset{def}{=} \left\{L \in \mathbb{L}(\mathcal{G}) \mid \left( L, e_r^T\right) \text{ is not observable} \right\} \\
			&\mathbb{L}_{2,r}^{c}(\mathcal{G}) \overset{def}{=} \left\{L \in \mathbb{L}(\mathcal{G}) \mid \left( L, e_r\right) \text{ is not controllable} \right\} \\
			&\mathbb{L}_1^c (\mathcal{G}) \overset{def}{=} \left\{ L \in \mathbb{L}(\mathcal{G}) \mid \text{A right eigenvector of $L$ has a zero entry} \right\} \\
			&\mathbb{L}_2^c (\mathcal{G}) \overset{def}{=} \left\{ L \in \mathbb{L}(\mathcal{G}) \mid \text{A left eigenvector of $L$ has a zero entry} \right\} \\
			&\mathbb{L}_3^c (\mathcal{G}) \overset{def}{=} \left\{ L \in \mathbb{L}(\mathcal{G}) \mid \text{An  eigenvalue of $L$ is not simple} \right\} \\
			&\pi: \mathbb{L}(\mathcal{G}) \rightarrow \mathbb{R}_{< 0} ^{|\mathbb{E}|},
		\end{split}
	\end{equation*}
	where $e_r$ is the $r$-th column of the $m$-dimensional identity matrix, and $\mathbb{R}_{< 0} ^{|\mathbb{E}|}$ is the set of $|\mathbb{E}|$-dimensional vectors whose entries are all negative. To prove Theorem \ref{thm_wlm}, it is sufficient to prove that $\pi \left( \mathbb{L}_1^c(\mathcal{G}) \right)$, $\pi \left( \mathbb{L}_2^c(\mathcal{G}) \right)$, and $\pi \left( \mathbb{L}_3^c(\mathcal{G}) \right)$ all have the Lebesgue measure zero in $\mathbb{R}_{< 0} ^{|\mathbb{E}|}$.

In \cite{tchon1983_kybernetika}, the observability is shown to be a \textit{generic property} of structured linear systems. In words, unless every realization of a given structured linear system is not observable, almost every realization is observable. Hence, we have that unless $\mathbb{L}_{1,r}^{c}(\mathcal{G}) = \mathbb{L}(\mathcal{G})$, $\pi\left( \mathbb{L}_{1,r}^{c}(\mathcal{G}) \right)$ has the Lebesgue measure zero in $\mathbb{R}_{< 0}^{|\mathbb{E}|}$. By a similar argument for the controllability of structured linear systems, we conclude that unless $\mathbb{L}_{2,r}^{c}(\mathcal{G}) = \mathbb{L}(\mathcal{G})$, $\pi\left( \mathbb{L}_{2,r}^{c}(\mathcal{G}) \right)$ has the Lebesgue measure zero in $\mathbb{R}_{< 0}^{|\mathbb{E}|}$.

	Since $\mathcal{G}$ is a strongly connected graph, by Lemma \ref{lemma_wlm_graph} (in Appendix \ref{subsection_struct_cont_observ}), we can show that for any $r \in \mathbb{V}$, $\mathbb{L}_{1,r}^{c}(\mathcal{G})$ and $\mathbb{L}_{2,r}^{c}(\mathcal{G})$ are proper subsets of $\mathbb{L}(\mathcal{G})$; hence, $\pi\left(\mathbb{L}_{1,r}^{c}(\mathcal{G})\right)$ and $\pi\left(\mathbb{L}_{2,r}^{c}(\mathcal{G})\right)$ have the Lebesgue measure zero in $\mathbb{R}_{< 0}^{|\mathbb{E}|}$. Since $\mathbb{L}_1^c(\mathcal{G}) = \bigcup_{r \in \mathbb{V}} \mathbb{L}_{1,r}^{c}(\mathcal{G})$ and $\mathbb{L}_2^c(\mathcal{G}) =~\bigcup_{r \in \mathbb{V}} \mathbb{L}_{2,r}^{c}(\mathcal{G})$, we conclude that $\pi\left(\mathbb{L}_1^c (\mathcal{G})\right)$ and $\pi\left(\mathbb{L}_2^c (\mathcal{G}) \right)$ have the Lebesgue measure zero in $\mathbb{R}_{< 0}^{|\mathbb{E}|}$.
			
	Next, to prove that $\pi\left(\mathbb{L}_3^c(\mathcal{G})\right)$ has the Lebesgue measure zero, we adopt the following argument from algebra. For a matrix $L \in \mathbb{R}^{m \times m}$, the solutions to a polynomial equation $$\Delta(\lambda) \overset{def}{=} det(L-\lambda I) = a_m \lambda^m + \cdots + a_1 \lambda + a_0 = 0$$ are all distinct if the discriminant $D(\Delta) \overset{def}{=} a_m^{2m-1} \prod_{1 \leq i < j \leq m} (\lambda_i - \lambda_j)^2$ is nonzero, where $\lambda_i$ and $\lambda_j$ are solutions to $\Delta(\lambda) = 0$. By a classical result in algebra, this particular discriminant can be written as a polynomial function of the coefficients of $\Delta(\lambda)$ and those of its derivative $\Delta'(\lambda)$. Since the coefficients of $\Delta(\lambda)$ and $\Delta'(\lambda)$ are polynomial functions of the elements of $L$, the discriminant $D(\Delta)$ is a polynomial function of the elements of $L$.

	Also, notice that for a polynomial function $\widetilde{D}$ defined on $\mathbb{R}^{m}$, the solutions $\left\{q \in \mathbb{R}^{m} \mid \widetilde{D}\left( q \right) = 0\right\}$ to the polynomial equation $\widetilde{D}\left( q \right) = 0$ form either the entire space $\mathbb{R}^{m}$ or a hypersurface in $\mathbb{R}^{m}$, which has the Lebesgue measure zero \cite{fogaty1969_benjamin}.
	
	Therefore, by the above arguments, it holds either $\mathbb{L}_3^{c}(\mathcal{G}) = \mathbb{L}(\mathcal{G})$ or $\pi\left(\mathbb{L}_3^{c}(\mathcal{G})\right)$ has the Lebesgue measure zero in $\mathbb{R}_{< 0}^{|\mathbb{E}|}$. For the strongly connected graph $\mathcal{G}$, we have seen from Lemma~\ref{lemma_wlm_graph} (in Appendix \ref{subsection_struct_cont_observ}) that there exists $L_3 \in \mathbb{L}(\mathcal{G})$ whose eigenvalues are all simple. Therefore, $\mathbb{L}_3^{c}(\mathcal{G})$ is a proper subset of $\mathbb{L}(\mathcal{G})$ and $\pi\left(\mathbb{L}_3^{c}(\mathcal{G})\right)$ has the Lebesgue measure zero in $\mathbb{R}_{< 0}^{|\mathbb{E}|}$. \hfill \QED
	
\end{appendices}

\bibliographystyle{IEEEtran}
\bibliography{IEEEabrv,spark2012_journal}

\end{document}